\newtheorem{Theorem}{Theorem}
\newtheorem{Corollary}{Corollary}
\newtheorem{Definition}{Definition}
\newtheorem{Example}{Example}
\newtheorem{Remark}{Remark}
\newtheorem{Lemma}{Lemma}
\newtheorem{Construction}{Construction}
\newtheorem{Proposition}{Proposition}
\newcommand{\F}{\ensuremath{\mathbb{F}}}
\newcommand{\Gen}{\mathsf{Gen}}
\newcommand{\Enc}{\mathsf{Enc}}
\newcommand{\Dec}{\mathsf{Dec}}
\newenvironment{psmallmatrix}
  {\left(\begin{smallmatrix}}
  {\end{smallmatrix}\right)}
\newcommand\ch[1]{\textcolor{magenta}{Cami: #1}}
\newcommand\jl[1]{\textcolor{cyan}{Jie: #1}}
\begin{document}
\title{Efficient Recovery of a Shared Secret via Cooperation: Applications to SDMM and PIR
}

\author{Jie Li,~\IEEEmembership{Member,~IEEE}, Okko Makkonen,
        Camilla Hollanti,~\IEEEmembership{Member,~IEEE},~and Oliver W. Gnilke% <-this % stops a space
\thanks{
Manuscript received June 1, 2021; revised November 1, 2021; accepted December 21, 2021. The work of J. Li was supported in part by the National Science Foundation of China under Grant No. 61801176. The work of C. Hollanti and O. Makkonen was supported by the Academy of Finland, under Grants No. 318937 and 336005.}
\thanks{J. Li was with the Department of Mathematics and Systems Analysis,
                    Aalto University, FI-00076 Aalto,  Finland, and also with the Hubei Key Laboratory of Applied Mathematics, Faculty of Mathematics and Statistics, Hubei University, Wuhan 430062, China (e-mail: jieli873@gmail.com).}% <-this % stops a space
\thanks{O. Makkonen and C. Hollanti are with the Department of Mathematics and Systems Analysis,
                    Aalto University, FI-00076 Aalto,  Finland  (e-mails: \{okko.makkonen,camilla.hollanti\}@aalto.fi).}
                    \thanks{O. Gnilke is with the Department of Mathematical Sciences, University of Aalborg, 9220 Aalborg \O, Denmark (email: owg@math.aau.dk).}
}
\date{}
\maketitle

\begin{abstract}
This work considers the problem of privately outsourcing the computation of a matrix product  over a finite field $\F_q$ to $N$ helper servers. These servers are considered to be honest but curious, \emph{i.e.}, they behave according to the protocol but will try to deduce information about the user's data. Furthermore, any set of up to $X$ servers is allowed to share their data. Previous works considered this collusion a hindrance and the download cost of the schemes increases with growing $X$. We propose to utilize such linkage between servers to the user's advantage by allowing servers to cooperate in the computational task. This leads to a significant gain in the download cost for the proposed schemes. The gain naturally comes at the cost of increased communication load between the servers. Hence, the proposed cooperative schemes can be understood as outsourcing both computational cost and communication cost. Both information--theoretically secure and computationally secure schemes are considered, showing that allowing information leakage that is computationally hard to utilize will lead to further gains.

The proposed server cooperation is then exemplified for specific secure distributed matrix multiplication (SDMM) schemes and linear private information retrieval (PIR). Similar ideas naturally apply to many other use cases as well, but not necessarily always with lowered costs.
\end{abstract}

\begin{IEEEkeywords}
Computational Security, Information-Theoretic Security, Secret Sharing, Secure Distributed Matrix Multiplication (SDMM), Cooperative SDMM, Private Information Retrieval (PIR).
\end{IEEEkeywords}

\section{Introduction}
Matrix multiplication is one of the key operations in many science and engineering fields, such as machine learning and cloud computing. Carrying  out the computation on powerful distributed servers is desirable for improving efficiency, as the user can partition the computational task into several sub-tasks and outsource them to many servers. By scaling out computations across many distributed servers, security concerns arise. This raises the problem of secure distributed matrix multiplication (SDMM), which has recently received a lot of attention from an information--theoretic perspective, both in terms of code constructions and capacity bounds  \cite{dutta2018unified,chang2018capacity,dutta2020optimal,yang2019secure,kakar2019capacity,jia2019capacity,d2021degree,d2020gasp,yu2020entangled,aliasgari2020private,jia2021cross,zhu2021improved}. 
Private information retrieval (PIR) --- another problem currently getting a lot of attention --- can be seen as a special case of SDMM in the typical case of the responses being linear functions \cite{chor1995private,sun2017capacity,sun2017capacityrobust,banawan2018capacity,tajeddine2018private,freij2017private,freij2018t,kumar2019achieving,zhu2019new,zhou2020capacity,jia2020x,wang2019symmetric,d2019one,li2020towards,holzbaur2022towards,song2021equivalence,zhu2021capacity}.
Indeed, a connection between a variant of SDMM and a form of PIR was drawn in \cite{jia2019capacity}, where an upper bound of the capacity of a special form of SDMM was characterized by that of PIR. Later in \cite{d2020notes}, it was shown that the problem of SDMM  can  be  efficiently  solved  via the PIR  method if the download cost is the only performance metric. However, the computational cost to the servers and the upload cost are very high. It was further shown in \cite{d2020notes} that polynomial codes can indeed improve the total time complexity as long as the parameters are carefully chosen.

Generally, three performance metrics are of particular interest when designing a distributed computation scheme (\emph{e.g.}, matrix multiplication).
\begin{itemize}
  \item The upload cost:  the amount of data transmitted from the user to the servers to assign the sub-tasks;
  \item The download cost: the amount of data to be downloaded from the servers;
  \item Recovery threshold $R_c$:  the minimal number of servers that need to complete their tasks before the user can recover the desired computation result.
\end{itemize}
In the case of secure distributed matrix multiplication,  the upload cost can be essentially determined once the matrix partition method is chosen and the number of servers is given.

\subsection{Related Work}
Existing SDMM codes are mostly based on various matrix partitioning and coding techniques used in \cite{yu2017polynomial,dutta2018unified,dutta2020optimal,yu2020straggler},  and abundant tradeoffs among the aforementioned performance metrics have been derived. In \cite{yu2020straggler,dutta2020optimal,dutta2018unified}, the matrices  $A\in \mathbb{F}_q^{t\times s}$ and $B\in \mathbb{F}_q^{s\times r}$ are partitioned as
\begin{equation}\label{Eqn_partition_AB}
% \nonumber to remove numbering (before each equation)
 A=\left(A_{k,j}\right)_{k\in [0,m), j\in [0, p)},
   \,\, B=\left(B_{j',k'}\right)_{j'\in [0,p), k'\in [0, n)},    
\end{equation}
where $m |t$, $p | s$, and $n|r$, which subsumes the so-called \textit{inner  product partitioning (IPP)} (when $m=n=1$ in \eqref{Eqn_partition_AB}) and the \textit{outer  product partitioning (OPP)} (when $p=1$ in \eqref{Eqn_partition_AB}) as special cases \cite{d2021degree}. In \cite{jia2019capacity,jia2021cross,chen2021gcsa}, the so-called cross-subspace-alignment codes are employed to outsource the subtasks.

In \cite{chang2018capacity}, Chang and Tandon proposed constructions of SDMM codes and  addressed  the capacity when only one of the matrices $A$, $B$ is required to be $X$-secure. Yang and Lee \cite{yang2019secure} proposed SDMM codes in the case of $X=1$. Kakar \emph{et al.} \cite{kakar2019capacity} and D'Oliveira \emph{et al.} \cite{d2020gasp,d2021degree} further provided more efficient constructions over that of Chang and Tandon.  Based on the matrix partitioning technique in \cite{yu2020straggler,dutta2020optimal}, Aliasgari \emph{et al.} \cite{aliasgari2020private}   presented more general SDMM codes. Yu \emph{et al.} \cite{yu2020entangled} recently proposed an SDMM scheme based on the Entangled Polynomial code in \cite{yu2020straggler} and the bilinear complexity \cite{blaser2013fast} for multiplying two matrices. Mital \emph{et al.} \cite{mital2020secure} proposed an SDMM scheme with discrete Fourier transform, where IPP is employed and the download cost is shown to be smaller than some previous ones. However, it cannot mitigate stragglers. Based on the general matrix partitioning in \eqref{Eqn_partition_AB}, another scheme which  
provides a slightly weaker ‘group-wise’ straggler-robustness was also provided in \cite{mital2020secure}.  Table \ref{comp} gives a summary of the aforementioned results.

\begin{table*}[htbp]
\begin{center}
\caption{A summary of the key parameters of some known SDMM schemes.}
\label{comp}\setlength{\tabcolsep}{3pt}
\begin{tabular}{|c|c|c|c|c|}
\hline  & Upload cost & Download cost & $R_c$ (Recovery Threshold)& References \\
\hline Chang-Tandon code  & $N(\frac{ts}{m}+\frac{sr}{m})$ & $R_c\frac{tr}{m^2}$ & $(m+X)^2$ & \cite{chang2018capacity}\\
\hline Kakar \emph{et al.} code 1  & $N(\frac{ts}{m}+\frac{sr}{n})$ & $R_c\frac{tr}{mn}$ & $(m+X)(n+1)-1$ & \cite{kakar2019capacity}\\
\hline GASP code  & $N(\frac{ts}{m}+\frac{sr}{n})$ & $R_c\frac{tr}{mn}$ & $\ge mn+\max\{m, n\}+2X-1$ & \cite{d2021degree}\\
\hline SGPD code  & $N(\frac{ts}{mp}+\frac{sr}{pn})$ & $R_c\frac{tr}{mn}$ & $\left\{
  \begin{array}{ll}
 pmn+pm+pn\lceil\frac{X}{p}\rceil+2X-1,  & p<m,\\
pmn+pm+(mn-m)\lceil\frac{X}{\min\{m,n\}}\rceil+2X-1, & p\ge m.\\
  \end{array}
\right. $ & \cite{aliasgari2020private} \\
\hline Secure Entangled Polynomial code  & $N(\frac{ts}{mp}+\frac{sr}{pn})$ & $R_c\frac{tr}{mn}$ & $2R(p,m,n)+2X-1$ & \cite{yu2020straggler} \\
\hline  Mital \emph{et al.} code 1 &  $N(\frac{ts}{p}+\frac{sr}{p})$ & $R_ctr$ &$p+2X (=N)$ & \cite{mital2020secure} \\
\hline
\end{tabular}
\end{center}
\end{table*}

As a connection to X-secure T-PIR (XTPIR for short), which was first studied in \cite{jia2020x}, we can assume that the matrix $B$ is stored across the servers, while the user has a private matrix $A$. In this case, one can regard the matrix $B$ as a collection of $s$ messages, each of length $r$, \emph{i.e.}, a row of $B$ denotes a message. If $A$ is some row of the $s\times s$ identity matrix, then computing $AB$ is equivalent to retrieving one message from the database. Therefore,  any such SDMM scheme yields an XTPIR scheme with $T=X$ (where $X$ servers can collude to deduce the message identity and the message data stored in the servers) by treating the encoded shares of $A$ as the queries sent from the user to the servers, and the encoded shares of $B$ as data stored across the servers \cite{jia2019capacity}. For a general matrix $A$,
computing $AB$ is equivalent to retrieving $t$ linear combinations of the $s$ messages, which can be regarded as a generalization of XTPIR and a kind of private computation \cite{sun2018capacity}.

There are some other SDMM models besides the above ones, in \cite{jia2019capacity}, the problem of SDMM was generalized to the case that there are two matrix batches $A^{(0)}, A^{(1)},\ldots,A^{(L-1)}$ and $B^{(0)}, B^{(1)},\ldots,B^{(L-1)}$, where it is assumed that the matrices are mainly stored across the servers while the user may or may not have some side information related to the matrices. The user wishes to obtain the products $A^{(i)}B^{(i)}$ for all $i\in [0, L)$ with the assistance of $N$ distributed servers. This problem variant is termed  \textit{secure distributed batch matrix multiplication (SDBMM)}.

Although there are a lot of studies on SDMM, its fundamental limits remain an open problem. Up to now, only a few results are known. In \cite{chang2018capacity},  \cite{jia2019capacity}, and some other works, they only focus on minimizing the download cost, and aim to increase the SDMM rate, which is defined as the ratio of the size of the desired information to the download cost. The SDMM capacity is then the supremum of SDMM rate over all feasible schemes. In \cite{chang2018capacity}, the capacity of one-sided SDMM (\textit{i.e.,} only  one  of  the  matrices $A$, $B$ is required to be $X$-secure) is $\frac{R_c-X}{R_c}$. In \cite{jia2019capacity}, converse bounds of the SDBMM capacity under their model (\textit{i.e.}, the matrices are mainly stored across the servers while the user may or may not have some side information related to the matrices) were obtained, and were shown to be tight in some special cases. To the best of our knowledge, no other results on the upper bound of the SDMM capacity have been reported in the literature except the above ones.

\subsection{Motivation and Contributions}

An interesting observation  is that in the case of any $X$ servers colluding to deduce information about either $A$ or $B$ by sharing their data, previous work considered this collusion a hindrance and the download cost of the schemes increase 
with growing $X$.  It seems that, to date, no one has considered that such server connectivity could be utilized to the user's  advantage. For example, in the presence of connectivity or bandwidth
constraints, the server--user and server--server communication
costs differ depending on their mutual proximity. This has been assumed and widely studied in erasure coding for clustered architectures and its variations \cite{joonas,gaston2013realistic,pernas2013non,prakash2018storage,hou2019rack,chen2020explicit}, where it is assumed that  the servers are organized into several clusters, and communication within the clusters is free or much cheaper than that between different clusters. In this paper, we adopt this idea to the SDMM problem, and assume that the servers and the user have such an architecture that the communication between the helper servers is very cheap so that it can be neglected. This assumption is of course not always valid, in which case the increased inter-server communication should be accounted for. Outsourcing the computation to servers also helps in avoiding network congestion between the servers and the user.

In this paper, we propose to use a possible link between servers to the user's advantage by allowing servers to cooperate in the computational task. This leads to a significant reduction in the  download cost for the proposed scheme, and helps to avoid network congestion between the servers and the user.
The reduced download cost does not come for free, but the price is paid by the servers, \emph{i.e.}, is achieved by offloading the communication between servers and the user to the communication between servers.

In more detail, the contributions of this paper can be  summarized as:
\begin{itemize}
    \item To the best of our knowledge, it is the first time to consider the case that the (download)  communication cost of the user is partially outsourced to the servers, similarly as is done for the computation. This is enabled by allowing the servers to cooperate in addition to colluding.
This is particularly attractive in applications where the server--user and server--server communication costs differ. This could be the case when the helper servers are clustered within close proximity. 

\item We present several examples of how communication cost can be outsourced to helper nodes. We present these examples in the language of secret sharing, since the data retrieval phase in SDMM and PIR can be seen as data collection and recovery in a secret sharing scheme \cite{song2021equivalence}.

\item We study an  SDMM scheme from an information--theoretic perspective that lends itself  particularly well to the use of cooperation among servers. We show that this cooperative scheme outperforms several other non-cooperative schemes for a wide range of parameters. An XTPIR scheme under the server cooperation model is also introduced, and can achieve a larger PIR rate than the non-cooperative scheme.

\item We also consider an encryption-based cooperative SDMM scheme, which achieves a better rate at the cost of losing information--theoretic security. We show that such a scheme is still computationally secure against colluding servers.

\end{itemize}

The idea of utilizing cooperation between colluding servers to reduce communication cost can be applied to other schemes as well. However, it is highly non-trivial how to organize the servers and their communication to optimize this cooperation.

We note that another form of server cooperative  SDMM has also been studied but in the context of secure multi-party computation in \cite{akbari2021secure,chen2021gcsa} recently. This model contains three parties: the source nodes, the server nodes (helpers), and a master node (the user). The data matrices are outsourced  by the source nodes, and the aim is to prevent the master node from learning anything about the inputs besides the result of the computation. This is achieved by employing secure multi-party computation, where server cooperation is used as a key ingredient  but it does not help in reducing the download cost.   The problem settings  in \cite{akbari2021secure,chen2021gcsa} and the one in this paper are quite different, thus it is not meaningful to compare these schemes even though server cooperation is employed in some form in all of them. 

\subsection{Organization}
The paper is organized as follows.
In Section \ref{sec:pre}, we introduce some  preliminaries, including the so-called collusion and cooperation graph for the abstraction of the problem, the computation problem setting, Reed-Solomon codes, secret sharing, and symmetric encryption. In Section \ref{sec:colla}, we introduce information--theoretically secure cooperative data retrieval, and present examples of how cooperation between servers can significantly reduce the download cost for the user. Furthermore, a motivating example of the SDMM code under the server cooperation model and the general scheme are presented, followed by an extensive comparison with existing schemes, an XTPIR scheme under the server cooperation model is also introduced.  In Section \ref{sec:enc_coop}, we introduce computationally secure cooperative schemes based on encryption and show how existing SDMM schemes from an information--theoretical perspective can be converted to computationally secure SDMM schemes under the server cooperation model. Finally, Section \ref{sec:conclusion} draws the conclusions.

\section{Preliminaries and Problem Setting}
\label{sec:pre}
In this section, we introduce some necessary preliminaries. First of all, let us fix some notation. Let $\mathbb{F}_q$ denote a finite field containing $q$ elements, where $q$ is a prime power. For two integers $a$ and $b$ with $a\le b$, denote by $[a, b)$ the set $\{a,a+1,\ldots,b-1\}$.

\subsection{Collusion and Cooperation}\label{Sec:collusion_graph}
Let $V$ be the set of helper servers. Then we define the collusion (resp. cooperation) graph $(V,E)$ on the vertex set $V$ by adding an edge $(v_i,v_j)$ to $E$ whenever servers $v_i$ and $v_j$ collude (resp. cooperate). We only consider the symmetric case here where both servers gain access to each other's data, but similar results can be formulated for the directed case.

\begin{Definition}
A scheme is secure against $X$-collusion if no information is leaked to the servers even if there is $X$-collusion, \emph{i.e.}, any $X$ servers may exchange their received messages. This is equivalent to the biggest collusion component of $(V,E)$ being of size $X$.
\end{Definition}

\begin{Remark} As is customary (also cf. $t$-PIR schemes where any $t$-set may collude), it is assumed that collusion is non-transitive. In other words, if $X_1$ is a colluding set and $X_2$ is another colluding set with $X_1\cap X_2\neq\emptyset$, it does \emph{not} follow that $X_1\cup X_2$ would also be a colluding set.
\end{Remark}

\subsection{Problem Setting}
Assume that the user wishes to recover a shared secret, for example,  the product of two matrices (\textit{i.e., the SDMM problem}) or one out of many files that are stored across several servers (\textit{i.e., the PIR problem}). The user needs assistance from the servers to complete the task, but does not want to reveal any related information to the colluding servers. 

In the following, we consider these two specific problems for motivation. First, we restrict the problem setting to SDMM under the server cooperation model. Then, one can immediately find that the  PIR problem setting with cooperation is the same as SDMM once we introduce it in matrix form in Section \ref{sec:IT-XTPIR}. The setting for a general cooperative recovery of a shared secret is then similar. 
In SDMM under server cooperation model, the user is interested in computing the product of  two matrices $A\in \mathbb{F}_q^{t\times s}$ and $B\in \mathbb{F}_q^{s\times r}$  over some  finite field $\mathbb{F}_q$ with the  assistance of $N$ honest-but-curious servers. The contents of the matrices $A$ and $B$ should remain secret in the information--theoretic sense, \emph{i.e.}, the
servers learn nothing about the content of the matrices even if $X$ of them  collude, where $1\le X<N$.
The multiplication of the two matrices $A$ and $B$ can be accomplished according to the following steps.
\begin{itemize}
\item Upload phase: The user encodes $A$ and $B$ with some random matrices to obtain matrices $\tilde{A}_i$ and $\tilde{B}_i$ for $i\in [0, N)$,  and then sends $\tilde{A}_i$ and $\tilde{B}_i$ to server $i$. In addition, the user sends  the respective evaluation points $a_i\in\mathbb{F}_q, i\in [0, N)$  to the servers, which will be used to generate the shared data in the next step\footnote{As these evaluation points are always small compared to the matrices, the cost of uploading them can be neglected. It may also be possible to assume that this information is a priori shared with the helper servers. For this reason and also for the convenience of notation, the upload cost of the additional data $a_i, i\in [0, N)$ will be neglected in the sequel.}.
\item Computation and cooperative phase:  First, server $i$ computes the product of $\tilde{A}_i$ and $\tilde{B}_i$ for all $i\in [0, N)$. Second, the fastest $R_c$ servers seek to cooperate by forming cooperation groups. Each server in a cooperation group sends some data to a representative server of the cooperation group. Let $\mathcal{M}_{j\rightarrow j'}$ denote the data that server $j$ sends to server $j'$, and $\mathcal{M}_j$ denote all the data that server $j$ receives from the other servers.
\item Decoding phase: A representative server in each cooperation group $i$ sends a message to the user, denoted by $Y_i$ for $i\in \mathcal{R}\subset [0, N)$, where $|\mathcal{R}|$ is the number of cooperation groups. The user then decodes $AB$ from what they received.
\end{itemize}

The scheme must satisfy the following two constraints.
\begin{itemize}
  \item $X$-security: Any $X$ servers learn nothing about neither $A$ nor $B$, \textit{i.e.,}
\begin{eqnarray*}
% \nonumber to remove numbering (before each equation)
  I(\{\tilde{A}_i, \tilde{B}_i\}_{i\in \mathcal{X}}; A, B)=0
\end{eqnarray*}
for any $\mathcal{X}\subset [0, N)$ with $|\mathcal{X}|=X$, where $I(Y;Z)$ denotes the mutual information between $Y$ and $Z$.
  \item Correctness: The user must be able to recover $AB$ from $Y_i$, $i\in \mathcal{R}$, \textit{i.e.,}
\begin{eqnarray*}
% \nonumber to remove numbering (before each equation)
  H(AB|Y_i, i\in \mathcal{R})=0,
\end{eqnarray*}
where $H(Y|Z)$ denotes the entropy of $Y$ conditioned on $Z$.
\end{itemize}

In addition, we wish to minimize the recovery threshold $R_c$ as well as the communication cost, which is comprised of  the upload cost, the download cost, and cooperation cost (\emph{i.e.}, communication cost among servers). As usual in the SDMM setting we define the upload cost as $$\sum\limits_{i=0}^{N-1}\left(|\tilde{A}_i|+|\tilde{B}_i|\right),$$ where $|M|$ is the size of the matrix $M$ counted as $\F_q$ symbols.  
Similarly, the download cost is defined as   $\sum\limits_{i\in \mathcal{R}}|Y_i|$. The cooperation cost is defined as  $\sum\limits_{i\in \mathcal{R}}|\mathcal{M}_i|$.
\begin{Remark}
For PIR it is common to measure these quantities using entropy instead of size, assuming a compression can be applied before transfer. For the SDMM problem though that we want to focus on, requiring the user to decompress the received information can be equivalent or harder than the computational task that was outsourced.
\end{Remark}

The problem setting for XTPIR via cooperation is similar, as we see in Section  \ref{sec:IT-XTPIR}.

\subsection{Reed-Solomon (RS) Codes}
The encoding phase of SDMM codes is usually based on Reed-Solomon codes or their sub-codes. To this end, we briefly introduce RS codes in the following.

\begin{Definition}(\cite{reed1960polynomial})
Let $x_0,x_1,\ldots,x_{N-1}$ be $N$ distinct elements in $\mathbb{F}_q$. The $[N,K]$ RS code associate to these $N$ locators is defined as
\begin{equation*}
\mathcal{C}=\{\left(p(x_0),p(x_1),\ldots,p(x_{N-1})\right) \mid p\in \mathbb{F}_q[x], \deg(p)<K\}.
\end{equation*}
In other words, a generator matrix of the $[N,K]$ RS code can be given as
\begin{equation*}
    G=\begin{pmatrix}
1 & 1 & \cdots & 1\\
x_0 & x_1 & \cdots & x_{N-1}\\
\vdots & \vdots & \ddots & \vdots\\
x_0^{K-1} & x_1^{K-1} & \cdots & x_{N-1}^{K-1}
    \end{pmatrix}.
\end{equation*}
\end{Definition}
The above generator matrix is the transpose of a Vandermonde matrix over $\mathbb{F}_q$, showing that RS codes belong to the class of MDS codes.

\subsection{Secret Sharing}
Secret sharing has been introduced independently in \cite{shamir1979share} and \cite{blakley1979safeguarding}. A good introduction to the general theory of secret sharing can be found in \cite{LecNotesSecret}.
We begin by describing a very general setup that can be found in many applications.

\begin{Proposition}[Secure Coded Storage]\label{Prop_PrivStor}
Let $C$ be a linear $[N,K]$ code with generator matrix $G$. Let $G_{>{\rho}}$ be the matrix consisting of the lowest $K-{\rho}$ rows of $G$. The matrix $G_{>{\rho}}$ defines an $[N,K-{\rho}]$ code and we denote the minimum distance of its dual by $X+1$. Then we can securely store ${\rho}$ symbols on a storage system consisting of $N$ servers using the code $C$ such that any $X$ servers learn nothing about the ${\rho}$ information symbols.

\begin{proof}
We begin by describing the storage and then prove its secrecy. Let $m=(m_0,\dots, m_{\rho-1})$ be the vector containing our information, and $s=(s_0, \dots, s_{K-{\rho}-1})$ be a random vector in $\F_q^{K-{\rho}}$. We encode $(m, s)$ into a length $N$ vector by calculating
\[ (m_0,\dots, m_{\rho-1},s_0,\dots,s_{K-{\rho}-1})G=(y_0,\dots,y_{N-1}) \]
and store the symbol $y_i$ on server $i$.

Now consider any set of $X$  servers that might collude with index set $T$. Together they observe the partial vector $y_{|T}=(y_t)_{t \in T}$ which is given by
\[ y_{|T}=m {(G_{\leq {\rho}})}_{|T} + s {(G_{>{\rho}})}_{|T}. \]
Since the dual of $G_{>{\rho}}$ has minimum distance $X+1$, any set of $X$ columns of $G_{>{\rho}}$ is linearly independent. Hence the vector $s {(G_{>{\rho}})}_{|T}$ is uniformly random in $\F_q^{X}$ and the servers can not learn anything about the vector $m$.
\end{proof}
\end{Proposition}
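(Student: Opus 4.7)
The plan is to proceed with the standard linear secret sharing construction and then reduce security to a rank statement about the submatrices of $G_{>\rho}$. First I would describe the encoding: treat the information as a vector $m\in\F_q^{\rho}$, draw a uniformly random vector $s\in\F_q^{K-\rho}$ independently of $m$, concatenate to form $(m,s)\in\F_q^K$, and encode via $G$ to obtain the codeword $y=(m,s)G\in\F_q^N$. Server $i$ stores the coordinate $y_i$. Correctness of recovery (by the legitimate user holding all shares, or by any information set of $K$ of them) follows immediately from the fact that $G$ is a generator matrix of an $[N,K]$ code.

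Next I would split $G$ into its top $\rho$ rows $G_{\leq\rho}$ and its bottom $K-\rho$ rows $G_{>\rho}$, so that $y=mG_{\leq\rho}+sG_{>\rho}$. For an arbitrary colluding set $T\subseteq[0,N)$ with $|T|=X$, the view of the coalition is the projection $y_{|T}=m(G_{\leq\rho})_{|T}+s(G_{>\rho})_{|T}$, and I would aim to prove that this view is distributed uniformly on $\F_q^X$ regardless of $m$, which is equivalent to $I(\{y_i\}_{i\in T};m)=0$.

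The heart of the argument, and the step I expect to be the only nontrivial one, is to translate the hypothesis on the dual minimum distance into a rank condition on $(G_{>\rho})_{|T}$. The plan is to invoke the classical fact that for any linear code, the dual minimum distance equals the smallest number of columns of a generator matrix that are linearly dependent. Since the $[N,K-\rho]$ code generated by $G_{>\rho}$ has dual minimum distance $X+1$, every set of $X$ columns of $G_{>\rho}$ is linearly independent. In particular, $(G_{>\rho})_{|T}$ is a $(K-\rho)\times X$ matrix of full column rank $X$, so the linear map $s\mapsto s(G_{>\rho})_{|T}$ is surjective onto $\F_q^X$ with each fibre of equal cardinality $q^{K-\rho-X}$. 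Consequently, for fixed $m$ the random variable $s(G_{>\rho})_{|T}$ is uniform on $\F_q^X$, and adding the deterministic shift $m(G_{\leq\rho})_{|T}$ preserves uniformity. Hence $y_{|T}$ is independent of $m$, establishing the claimed $X$-security.

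Finally I would remark that the construction requires $K-\rho\geq X$ (otherwise no code of length $N$ and dimension $K-\rho$ can have dual distance $X+1$), and that in the MDS case one recovers the familiar Shamir-type threshold scheme by taking $G$ to be a Vandermonde generator of an $[N,K]$ Reed--Solomon code. No other subtleties appear to arise.
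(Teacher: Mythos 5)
Your proposal is correct and follows essentially the same route as the paper: the identical encoding $(m,s)G$, the same split $y_{|T}=m(G_{\leq\rho})_{|T}+s(G_{>\rho})_{|T}$, and the same translation of the dual-distance hypothesis into full column rank of $(G_{>\rho})_{|T}$, hence uniformity of the coalition's view. You merely spell out a few details the paper leaves implicit (equal fibre sizes, invariance of the uniform distribution under the deterministic shift, and the necessary condition $K-\rho\geq X$), which are welcome but do not change the argument.
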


\begin{Corollary}\label{CorXsecure}
Let $C$ be an $[N,K]$ RS code. Then we can securely store $\rho < K$ symbols on a database consisting of $N$ servers using the code $C$, such that any $X=K-\rho$ servers learn nothing about the information symbols.
\end{Corollary}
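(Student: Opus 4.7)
The plan is to apply Proposition \ref{Prop_PrivStor} directly to the given Reed-Solomon code and show that the quantity $X$ coming from its conclusion equals $K-\rho$. Concretely, I would take the generator matrix $G$ of $C$ to be the Vandermonde matrix displayed in the definition of RS codes, so that $G_{>\rho}$ is the $(K-\rho)\times N$ submatrix whose $\ell$-th row is $(x_0^{\rho+\ell},x_1^{\rho+\ell},\ldots,x_{N-1}^{\rho+\ell})$ for $\ell=0,1,\ldots,K-\rho-1$. By Proposition \ref{Prop_PrivStor}, it then suffices to establish that the dual of the $[N,K-\rho]$ code generated by $G_{>\rho}$ has minimum distance exactly $(K-\rho)+1$, i.e., that this subcode is MDS.

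The key step is therefore to show that $G_{>\rho}$ generates an MDS code. I would do this by exhibiting it as a generalized Reed-Solomon code: every codeword has the form $(x_0^{\rho}g(x_0),\ldots,x_{N-1}^{\rho}g(x_{N-1}))$ for some polynomial $g$ of degree less than $K-\rho$. Pulling out the common column scalars $x_i^{\rho}$ (which are nonzero under the standing assumption that the evaluation points are nonzero), this is simply a GRS code of dimension $K-\rho$, which is well-known to be MDS. Equivalently, one can argue directly: any $(K-\rho)\times (K-\rho)$ submatrix of $G_{>\rho}$ obtained by picking columns indexed by $I\subset[0,N)$ factors as $\mathrm{diag}(x_i^{\rho})_{i\in I}$ times a standard Vandermonde matrix in the $x_i$'s, whose determinant is nonzero by distinctness of the $x_i$, so every such minor is nonzero.

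By duality for MDS codes, the minimum distance of the dual of the code generated by $G_{>\rho}$ equals $(K-\rho)+1$, so in the notation of Proposition \ref{Prop_PrivStor} we obtain $X+1=K-\rho+1$, i.e., $X=K-\rho$, which is exactly the claim.

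The only real obstacle is the boundary case where one of the evaluation points happens to be $0$ and $\rho\geq 1$: then the corresponding column of $G_{>\rho}$ is identically zero, the subcode fails to be MDS, and the dual distance collapses to $1$. I would handle this either by restricting to $N$ distinct \emph{nonzero} evaluation points (the customary choice in the SDMM literature, and compatible with the constructions later in the paper) or by remarking that one can always replace the bottom $K-\rho$ rows of $G$ by any $K-\rho$ rows generating a GRS subcode, which does not affect the secret-sharing conclusion. With this convention in place, the proof is essentially a one-line reduction to Proposition \ref{Prop_PrivStor}.
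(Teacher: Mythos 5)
Your proof is correct and takes essentially the paper's route: a direct reduction to Proposition~\ref{Prop_PrivStor} after observing that the bottom $K-\rho$ rows of an RS generator matrix generate an MDS (GRS) code, so its dual has minimum distance $K-\rho+1$, giving $X=K-\rho$. The subtlety you flag about a zero evaluation point is precisely what the paper's reversed-power Vandermonde in Example~\ref{ex_Shamir} and the Remark following it are designed to dispose of; your second suggested fix (replacing the bottom $K-\rho$ rows by ones spanning an honest RS subcode) is the same idea.
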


Many secret sharing schemes can be understood as secure coded storage in the sense of Proposition \ref{Prop_PrivStor}. Furthermore, some PIR schemes and SDMM schemes have as an intermediate step a secure coded storage where the servers hold coded shares of a secret message. The following considerations hence apply to all of these schemes.

\begin{Definition}[Linear Secret Sharing]\label{Def_linear_secret}
 A secret sharing scheme is considered to be linear if the secret $m$ is a linear combination of any $K$ shares.
\end{Definition}

\begin{Example}[Shamir Secret Sharing]\label{ex_Shamir}
We describe the Shamir Secret Sharing scheme and explain how it fits our definition of secure coded storage. Let $x_0, \dots, x_{N-1}$ be $N$ distinct elements of $\F_q$. We see that
\[ G= \begin{pmatrix} x_0^{K-1} & \cdots & x_{N-1}^{K-1}\\
    \vdots & \ddots& \vdots\\
    x_0& \cdots & x_{N-1}\\
    1 & \cdots & 1\\
\end{pmatrix}
\] is a generator matrix for an $[N,K]$ RS code. Clearly, $G_{>1}$ defines an $[N,K-1]$ RS code and its dual hence has minimum distance $K$. Given a secret $m$ and a random vector $u\in \F_q^{K-1}$ we calculate $N$ shares
    $(m,u)G=(y_0, \dots, y_{N-1})$.
This defines an $[N,K]$ threshold secret sharing scheme, where any $K-1$ servers can not learn anything about the secret $m$ but any $K$ can successfully recover the secret.
\end{Example}

\begin{Remark}
The unusual way of defining the generator matrix is not necessary, but this way the restriction that $0$ cannot be used as an evaluation point $x_i$ is removed and the proof that the lower part of the matrix defines an MDS code becomes trivial.
\end{Remark}

\subsection{Symmetric Encryption}
In order to allow cooperative SDMM schemes with more flexible matrix partitioning, we will also consider computational security instead of information--theoretic security. We will allow some information leakage that will be computationally hard to harness. This is enabled by utilizing a symmetric encryption scheme to aid secret sharing as described below. For more information, please refer to \cite{katz2020introduction, goldreich2005foundations}.

A symmetric encryption algorithm consists of three probabilistic polynomial-time algorithms $(\Gen, \Enc, \Dec)$ along with the key space $\mathcal{K}$, message space $\mathcal{M}$, and the ciphertext space $\mathcal{C}$. The spaces are parametrized with the security parameter $n$, \emph{e.g.}, the key space $\mathcal{K}$ consists of spaces $\mathcal{K}_n$ for $n \geq 1$. Often the key space, message space and ciphertext space are spaces of binary strings parametrized by the length of the string. Here we are interested in encryption algorithms over strings of finite field elements, \emph{i.e.}, $\mathcal{K}, \mathcal{M}, \mathcal{C} = \{ \F_q^n \}_{n \geq 1}$. 

The algorithm $\Gen$ takes the security parameter $1^n$ and outputs a key $k \in \mathcal{K}_n$ according to some probability distribution. Then, the algorithm $\Enc$ takes the key $k$ and a message $m \in \mathcal{M}$, and outputs an encryption $c = \Enc_k(m) \in \mathcal{C}$. Finally, the algorithm $\Dec$ takes the key $k$ and a ciphertext $c$, and outputs a plaintext value $\Dec_k(c) \in \mathcal{M}$ such that $\Dec_k(\Enc_k(m)) = m$. These algorithms run in polynomial time of the length of their input.

The strongest security requirement for an encryption scheme is the perfect information--theoretic security that requires that the mutual information between the ciphertext and the message is zero. The one-time pad achieves this notion of security. In this case any adversary is allowed to be arbitrarily powerful without being able to gain any information. Another notion of security is \emph{semantic security}, which states that no \emph{efficient} adversary is able to compute any additional information about the message given the ciphertext with a \emph{non-negligible} probability. Here efficient means probabilistic polynomial-time in the length of the input, and a negligible function is a function that is asymptotically smaller than any positive polynomial. The definition of semantic security is weaker than information--theoretic security, but it is a good analogue when information--theoretic security is unobtainable. In \cite{goldwasser1984probabilistic} it was shown that semantic security is equivalent to having indistinguishable encryptions under chosen plaintext attack (IND-CPA). 

\begin{Definition}\label{Def_ind-cpa}
A symmetric encryption scheme is IND-CPA secure when no efficient adversary can distinguish the encryptions of two chosen messages, when given oracle access to a decryption function. 
\end{Definition}

An important building block for designing encryption algorithms is a pseudorandom function (PRF).

\begin{Definition}\label{Def_PRF}
A pseudorandom function is a collection of functions $\{f_n\}$ parametrized by the security parameter, where $f_n \colon \F_q^n \times \F_q^n \to \F_q^n$ is deterministic, efficiently computable, and no efficient adversary can distinguish the partial function $f_n(k, \cdot)$ from a truly random function $\F_q^n \to \F_q^n$ with non-negligible probability. Here the probability is taken over the key space $\F_q^n$ and the randomness of the truly random function.
\end{Definition}

Additionally, we can define a variable output-length pseudorandom function, which produces an output of a given length. Such a PRF can be constructed from a fixed output-length PRF using counter mode, which runs the PRF using multiple inputs, which can produce arbitrarily many random values.
The following construction defines a symmetric encryption scheme from a variable length PRF.

\begin{Construction}\label{Con_enc}
This symmetric encryption scheme uses a variable output-length pseudorandom function $f$.
\begin{itemize}
    \item $\Gen(1^n)$ outputs a uniformly random key $k \in \F_q^n$.
    \item Given the key $k$ and message $m$ of length $|m|$, $\Enc$ draws $r \in \F_q^n$ uniformly at random and computes $z = f_n(k, r, |m|)$, \textit{i.e.} $z \in \F_q^{|m|}$ is pseudorandom. Then $\Enc$ outputs $(r, m + z)$.
    \item Given the key $k$ and a ciphertext $(r, c)$, $\Dec$ outputs $c - f_n(k, r, |c|)$.
\end{itemize}
\end{Construction}

Construction \ref{Con_enc} defines an IND-CPA secure encryption scheme. This encryption scheme is often known as a stream cipher, where a pseudorandom value is combined with the plaintext. The one-time pad is a special case of this encryption, where the size of the key is the same size as the message. Using pseudorandom functions, it is possible to reduce the size of the key, which is desirable in many applications. The proof of security can be found in \cite{katz2020introduction}. The IND-CPA security requires that the encryption is randomized, \textit{i.e.}, the same message is encrypted to a different ciphertext each time the algorithm is called. For a weaker security definition it is possible to have deterministic encryption, which means that the random value $r$ is not needed in Construction \ref{Con_enc}.

\section{Information--Theoretically Secure Cooperation}\label{sec:colla}

In this section, we study the efficient recovery of a shared secret from an information--theoretic perspective, with a particular emphasis on SDMM.

Particularly, we assume that the collusion and cooperation graphs coincide, \emph{i.e.}, the servers who cooperate are considered as colluding sets, and vice versa it is expected that colluding sets may also cooperate. The reason for this is that data is sent in an unencrypted form, hence revealing information already when cooperating (with or without interest to collude).  Note however that in both cases $X$ is an upper bound, not necessarily a strict number of colluding/cooperating servers. In addition, we assume that all but one component of the collusion/cooperation graph are of size $X$ and the remaining component contains $|V| \mod X$ servers.

\subsection{Cooperation}
\begin{Lemma}\label{Thm_SSrecover}
Let $y_0, \dots, y_{K-1}$ be a recovery set of a linear secret sharing scheme and $\alpha_i \in \F_q$ be the coefficients of the linear combination resulting in the secret $m$, \emph{i.e.}, $m=\sum \alpha_i y_i$.
Let $V_0, \dots, V_{\gamma-1}$ be the collection of connected components of the collusion graph, \emph{i.e.} $\bigsqcup_{c=0}^{\gamma-1} V_c=V$. Using cooperation between connected servers, it suffices to download $\gamma$ symbols to recover the secret.
\begin{proof}
We describe a recovery scheme that only contacts $\gamma$ servers and hence achieves a download cost of $\gamma$. For any connected component one vertex $v_c$ is selected. The server $v_c$ collects the shares of all servers in its component and calculates a response $r_c:=\sum_{i: v_i \in V_c} \alpha_i y_i$. We see that $\sum_{c=0}^{\gamma-1} r_c=\sum \alpha_i y_i=s$ and hence a data collector recovers the secret by adding the responses $r_c$.
\end{proof}
\end{Lemma}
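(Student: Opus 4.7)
The plan is to exploit the linearity of the recovery map directly. Since $m=\sum_{i=0}^{K-1}\alpha_i y_i$ with scalar coefficients $\alpha_i\in\F_q$, the sum can be regrouped along any partition of the index set, and in particular along the partition induced by the connected components $V_0,\dots,V_{\gamma-1}$ of the cooperation graph. This suggests a one-round protocol: nominate one representative server $v_c$ per component, have each representative locally form and transmit the partial linear combination over its component, and let the user add the $\gamma$ partial sums together.

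First I would set up the construction: for each component $V_c$ pick any representative $v_c\in V_c$. Since the servers in $V_c$ are mutually connected in the cooperation graph, every share $y_i$ with $v_i\in V_c$ can be forwarded to $v_c$ using only intra-component communication, which is considered free as far as the user's download cost is concerned. The representative then computes
\[ r_c \;=\; \sum_{i:\, v_i\in V_c} \alpha_i\, y_i \;\in\; \F_q \]
and transmits the single symbol $r_c$ to the user. Summing at the user gives
\[ \sum_{c=0}^{\gamma-1} r_c \;=\; \sum_{c=0}^{\gamma-1}\sum_{i:\, v_i\in V_c}\alpha_i y_i \;=\; \sum_{i=0}^{K-1}\alpha_i y_i \;=\; m, \]
where the middle equality uses that $\{V_c\}_c$ partitions the recovery set. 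Hence the secret is reconstructed from $\gamma$ downloaded field elements, which is the claimed bound.

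There is essentially no technical obstacle here; the whole argument is a rearrangement of a finite sum enabled by the linearity of the recovery map from Definition \ref{Def_linear_secret}. The only point worth a brief remark is that the protocol implicitly assumes that the $K$ contributing shares lie within the components we aggregate, which is automatic once we fix the cooperation groups around the participating servers. In effect, the lemma formalizes the intuition that linearity lets us offload the final summation step of the decoder to the helper nodes, trading $K$ individual downloads for $\gamma$ pre-summed ones.
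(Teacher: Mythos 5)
Your proof is correct and follows essentially the same approach as the paper: pick a representative per connected component, aggregate the partial linear combination $r_c = \sum_{i: v_i \in V_c} \alpha_i y_i$ locally, and recover $m$ as $\sum_c r_c$ by the linearity of the recovery map. The only cosmetic difference is that you spell out slightly more explicitly why intra-component forwarding is free and why the index set partitions correctly.
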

\begin{Remark}
Note that for the scheme described in Lemma \ref{Thm_SSrecover} it is not necessary to know the collusion graph beforehand. The servers are able to organize the calculation of the responses independently.
\end{Remark}

Obviously, some applications store more than one symbol using secure storage. In these cases, retrieval becomes more challenging. Improvements in the download cost using cooperating groups are still possible, but more care is needed in terms of which servers share their data. This is summarized in the lemma below, while the straightforward proof and examples showcasing the result are delegated to the extended arXiv version \cite{coopSDMM_arxiv}.

\begin{Lemma}
Let $G$ be the generator matrix of the storage code. We partition the set of columns into subsets $G|_{V_c}$ and the servers in the set $V_c$ compute a linear combination $r_c=\sum_{i \in V_c} \alpha_i y_i$. Then a user can recover a secret $m_i$ from these $r_c$ if $e_i$ is in the linear span of the vectors $(\sum_{i \in V_c} \alpha_i g^i)$,  where  $G|_{V_c}$ denotes the sub-matrix of $G$ formed by the columns indicated by $V_c$,  $g^i$ is the $i^{th}$ column of $G$, and $e_i$ is the $i$-th column of the identify matrix with the same order as $G$.
\end{Lemma}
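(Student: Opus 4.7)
The plan is to reduce the claim to a transparent linear algebra exercise. First I would identify the encoding vector with the row vector $x := (m_0, \dots, m_{\rho-1}, s_0, \dots, s_{K-\rho-1}) \in \F_q^K$, so that each share takes the form $y_j = x \cdot g^j$, with $g^j$ denoting the $j$-th column of $G$. Since the cooperative response contributed by component $V_c$ is, by construction, a linear combination of the shares held by its servers, bilinearity of the inner product yields
\[ r_c = \sum_{i \in V_c} \alpha_i y_i = x \cdot \Bigl(\sum_{i \in V_c} \alpha_i g^i\Bigr) = x \cdot h_c, \]
where I set $h_c := \sum_{i \in V_c} \alpha_i g^i \in \F_q^K$.

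Next I would observe that the user's view after the download phase is precisely the tuple $(x \cdot h_0, \dots, x \cdot h_{\gamma-1})$. To extract the secret $m_i = x \cdot e_i$, it suffices to realize the linear functional $x \mapsto x \cdot e_i$ as a combination of the functionals $x \mapsto x \cdot h_c$ already accessible to the user. Concretely, if $e_i$ lies in the linear span of $h_0, \dots, h_{\gamma-1}$ and we write $e_i = \sum_c \beta_c h_c$, then
\[ m_i = x \cdot e_i = \sum_c \beta_c (x \cdot h_c) = \sum_c \beta_c r_c, \]
which the user computes directly from the downloaded responses. This establishes the sufficient condition stated in the lemma.

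The only subtlety to watch out for is notational: the index $i$ plays a double role in the statement (both as a server index ranging over a component $V_c$ and as a message index in $e_i$), and the ambient space for the span is $\F_q^K$ (indexed by the rows of $G$, \emph{i.e.}, by the message and randomness symbols), not the share space $\F_q^N$. Beyond this bookkeeping, there is no real obstacle; the result is essentially the bilinearity of the inner product repackaged in secret-sharing notation. If desired, one could append a remark that the converse also holds whenever the encoding distribution has full support on $\F_q^K$, since then information about $m_i$ can only be extracted via linear functionals lying in the span of the $h_c$'s.
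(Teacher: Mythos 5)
Your argument is correct and is exactly the ``straightforward proof'' the paper alludes to (and defers to the arXiv version): writing each share as $y_j = x\cdot g^j$, observing $r_c = x\cdot h_c$ with $h_c := \sum_{i\in V_c}\alpha_i g^i$, and recovering $m_i = x\cdot e_i = \sum_c \beta_c r_c$ whenever $e_i = \sum_c\beta_c h_c$. Your side remark on the overloaded index $i$ and the ambient space being $\F_q^K$ rather than $\F_q^N$ is a fair and useful clarification of the lemma's slightly sloppy notation.
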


\subsection{Lagrange Interpolation}
Another important class of examples can be described using Lagrange interpolation.

\begin{Definition}\label{def_lagrangeex}  (\cite{stoer2013introduction})
Given a set of $K$ data points
 \begin{equation*}
   (x_{0},y_{0}),\ldots ,(x_{j},y_{j}),\ldots ,(x_{K-1},y_{K-1})
 \end{equation*}
where   $x_{j}$ are pairwise distinct, the \emph{interpolation polynomial in the Lagrange form} is a linear combination
\begin{equation}\label{Eqn_Lag_poly}
  L(x):=\sum\limits_{j=0}^{K-1}y_{j}\ell^{(j)}(x)
\end{equation}
of Lagrange basis polynomials
\begin{equation}\label{Eqn_ell}
  \ell^{(j)}(x):=\prod_{i=0,i\ne j}^{K-1}\frac{x-x_i}{x_j-x_i},
\end{equation}
where $j\in [0, K)$.
\end{Definition}

Rewrite the polynomials $L(x)$ in \eqref{Eqn_Lag_poly} and $\ell^{(j)}(x)$ in \eqref{Eqn_ell} as
\begin{eqnarray}\label{Eqn_poly_L}
% \nonumber to remove numbering (before each equation)
L(x)=L_0+L_1x+\ldots+L_{K-1}x^{K-1}
\end{eqnarray}
and
\begin{eqnarray}\label{Eqn_ell2}
% \nonumber to remove numbering (before each equation)
\ell^{(j)}(x)=\ell^{(j)}_0+\ell^{(j)}_1x+\ldots+\ell^{(j)}_{K-1}x^{K-1}.
\end{eqnarray}
Then by \eqref{Eqn_Lag_poly}-\eqref{Eqn_ell2}, the coefficient $L_{\theta}$ of $L(x)$  can be expressed as
\begin{equation}\label{Eqn_L_theta}
% \nonumber to remove numbering (before each equation)
L_{\theta}=y_0\ell^{(0)}_{\theta}+y_1\ell^{(1)}_{\theta}+\cdots+y_{K-1}\ell^{(K-1)}_{\theta} \mbox{\ for\ }\theta\in [0,K).
\end{equation}
Viewing the coefficient $L_\theta$ as a shared secret, we see that the scheme defined in Def. \ref{def_lagrangeex} is a linear secret sharing scheme in the sense of Def. \ref{Def_linear_secret}.
Thus we have the following result, which can be seen as a realisation of Lemma~\ref{Thm_SSrecover}.

\begin{Proposition}\label{Prop_lag}
Assume there are  $N$ servers, and any $X$ of them   can cooperate. Assume that $y_i$ is the evaluation of some polynomial $L(x)$ at $x_i$ for $i\in [0, N)$, where $\deg(L(x))=K-1$ and $K\in [1, N]$.  If server $i$ has the data  $x_0, \ldots, x_{N-1}$ and $y_i$ for $i\in [0, N)$, then the user can obtain one of the  coefficient $L_{\theta}$ in \eqref{Eqn_poly_L} with the download cost being  $\lceil\frac{K}{X}\rceil |L_{\theta}|$ and cooperation cost being $K|L_{\theta}| - \lceil\frac{K}{X}\rceil |L_{\theta}|$, where $\theta\in [0, K)$.
\end{Proposition}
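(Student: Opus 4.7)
The plan is to realize Proposition~\ref{Prop_lag} as a direct instance of Lemma~\ref{Thm_SSrecover}, with the Lagrange interpolation formula playing the role of the linear reconstruction map. The key observation is that equation~\eqref{Eqn_L_theta} already displays $L_\theta$ as a fixed $\F_q$-linear combination
\begin{equation*}
L_\theta \;=\; \sum_{j=0}^{K-1} \alpha_j\, y_j, \qquad \alpha_j := \ell^{(j)}_\theta,
\end{equation*}
of any $K$ evaluations $y_0,\dots,y_{K-1}$, where the coefficients $\alpha_j$ depend only on the publicly available locators $x_0,\dots,x_{N-1}$ and on $\theta$. Hence each server $j$ can, using only its own share $y_j$ and the locators it already holds, compute the single summand $\alpha_j y_j$; this identifies Lagrange interpolation as a linear secret-sharing scheme in the sense of Definition~\ref{Def_linear_secret}.

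Next I would organize the recovery combinatorially. Pick any $K$ servers (say those indexed by $[0,K)$) and partition them into $g := \lceil K/X \rceil$ cooperation groups $V_0,\dots,V_{g-1}$, each of size at most $X$; this is feasible because any $X$ servers are allowed to cooperate. In each group $V_c$ designate one representative server. Every non-representative $j \in V_c$ locally forms $\alpha_j y_j$ and transmits this single object (of size $|L_\theta|$) to the representative. The representative then computes
\begin{equation*}
r_c \;:=\; \sum_{j \in V_c} \alpha_j y_j
\end{equation*}
and sends $r_c$ to the user. Correctness follows immediately from
\begin{equation*}
\sum_{c=0}^{g-1} r_c \;=\; \sum_{j=0}^{K-1} \alpha_j y_j \;=\; L_\theta,
\end{equation*}
where the final equality is \eqref{Eqn_L_theta}.

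Finally I would read off the costs. The user receives exactly $g = \lceil K/X \rceil$ messages, each of size $|L_\theta|$, giving the claimed download cost $\lceil K/X \rceil \, |L_\theta|$. For the cooperation cost, exactly one server in each group is a representative, so $K - g = K - \lceil K/X \rceil$ non-representative servers each send a single $|L_\theta|$-sized symbol to their representative, for a total of $(K - \lceil K/X \rceil)|L_\theta| = K|L_\theta| - \lceil K/X \rceil|L_\theta|$, as claimed.

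The only point that requires a moment's care — and is really the one substantive issue — is to argue that the symbols traded between servers have size $|L_\theta|$ rather than $|y_j|$. This is why the non-representative transmits the already-scaled quantity $\alpha_j y_j$ rather than the raw share $y_j$: under the SDMM/PIR instantiations the shares $y_j$ and the coefficients $L_\theta$ lie in the same ambient $\F_q$-module, so $|\alpha_j y_j| = |L_\theta|$, and the cost bookkeeping matches the statement on the nose. Beyond this observation, the proof is a direct specialization of Lemma~\ref{Thm_SSrecover}.
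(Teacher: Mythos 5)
Your proof is correct and follows essentially the same approach as the paper: identify $L_\theta = \sum_j \ell^{(j)}_\theta y_j$ as a linear secret-sharing reconstruction, partition the fastest $K$ servers into $\lceil K/X\rceil$ cooperation groups of size at most $X$, have each group aggregate its weighted shares at a representative, and sum the representatives' responses at the user — exactly the realization of Lemma~\ref{Thm_SSrecover} that the paper gives, with the same download and cooperation cost bookkeeping.
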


\begin{proof}
Let $\overline{K}=\lceil \frac{K}{X}\rceil$. W.l.o.g., assume that the first $K$ servers are the fastest ones and the user obtains the desired coefficient from them. For $i\in [0, \overline{K}-1)$, assume that   servers $iX, iX+1,\ldots, iX+X-1$ seek to cooperate with each other, while servers $(\overline{K}-1)X,
\ldots, K-1$ seek to cooperate, then the coefficient $L_{\theta}$   can be retrieved through
 the following two phases.

\begin{itemize}
  \item Computation and cooperative phase: Each server $j$ first computes the polynomial $\ell^{(j)}(x)$ in \eqref{Eqn_ell} and then computes $y_j\ell^{(j)}_{\theta}$.
For $i\in [0, \overline{K}-1)$, server $iX+j$ transmits $y_{iX+j}\ell^{(iX+j)}_{\theta}$ to server $iX$ for $j\in [1, X)$, while server $(\overline{K}-1)X+j$ transmits $y_{(\overline{K}-1)X+j}\ell^{\left((\overline{K}-1)X+j\right)}_{\theta}$ to server $(\overline{K}-1)X$ for $j\in [1, K-(\overline{K}-1)X)$.
For $i\in [0, \overline{K}-1)$, server $iX$ further computes $$Y_{iX}=\sum\limits_{k=iX}^{iX+X-1}y_{k}\ell^{(k)}_{\theta},$$ while server $(\overline{K}-1)X$ further computes $$Y_{(\overline{K}-1)X}=\sum\limits_{k=(\overline{K}-1)X}^{K-1}y_{k}\ell^{(k)}_{\theta}.$$
\item Decoding phase:
For $i\in [0, \overline{K})$, server $iX$ sends $Y_{iX}$ to the user, who then sums the data they received to obtain  $L_{\theta}$ according to \eqref{Eqn_L_theta}.
\end{itemize}

Then the total download cost is $\lceil\frac{K}{X}\rceil |L_{\theta}|$ while the cooperation cost is
\begin{equation*}
\sum\limits_{i=0}^{\overline{K}-2}\sum\limits_{j=1}^{X-1}|L_{\theta}| +\sum\limits_{j=1}^{K-(\overline{K}-1)X-1}|L_{\theta}|=K |L_{\theta}| - \left\lceil\frac{K}{X}\right\rceil |L_{\theta}| 
\end{equation*}
since the amount of data communicated between two servers equals $|L_{\theta}|$ by \eqref{Eqn_Lag_poly}.
\end{proof}

\subsection{A Motivating Example of the SDMM Code under Server Cooperation}
Assume that the user is interested in computing $AB$ for $A\in \mathbb{F}_q^{t\times s}$ and $B\in \mathbb{F}_q^{s\times r}$ with the assistance of $N\ge 7$ servers, while any   $X = 2$ servers can collude to deduce the information of $A$ and $B$, where $t,s,r$ are even.
The user divides the matrices $A$ and $B$ into block matrices as
\begin{equation*}
 A=\begin{pmatrix}    A_{0} & A_{1} \\
   \end{pmatrix}
   \,,\, B=\begin{pmatrix}
       B_{0}  \\
       B_{1} \\
   \end{pmatrix},
\end{equation*}
where $A_{j}\in \mathbb{F}_q^{t\times \frac{s}{2}}$ for $j=0, 1$ and $B_{k}\in \mathbb{F}_q^{\frac{s}{2}\times r}$ for $k=0, 1$. Then
\begin{eqnarray*}
% \nonumber to remove numbering (before each equation)
  AB &=&  A_{0}B_0+A_1B_1.
\end{eqnarray*}

Let $Z_{0},  Z_{1}$ be two random matrices over $\mathbb{F}_q^{t\times \frac{s}{2}}$ and $S_{0},  S_{1}$ be two random matrices over $\mathbb{F}_q^{\frac{s}{2}\times r}$. The user encodes the matrices $A$ and $B$ by an $[N, 4]$ RS code, \textit{i.e.,} first creating two polynomials
\begin{eqnarray*}
% \nonumber to remove numbering (before each equation)
  f(x) &=& A_{0}+A_1x+Z_0x^{2}+Z_1x^{3}, \\
   g(x) &=& B_{0}x+B_1+S_0x^{2}+S_1x^3,
\end{eqnarray*}
and then evaluating them at $N$ distinct points in $\mathbb{F}_q$, say
$a_0, \ldots, a_{N-1}$. Then the user sends $f(a_i)$,  $g(a_i)$ to server $i$ for $i\in [0, N)$,
%and sends $a_0, \ldots, a_{N-}$ to servers $0,2,4,6$,
\textit{i.e.,} the upload cost is $N(\frac{ts}{2}+\frac{sr}{2})$.

Let $h(x)=f(x)g(x)$, \textit{i.e.,}
{\small
\begin{IEEEeqnarray*}{rCl}
% \nonumber to remove numbering (before each equation)
  &&h(x)\\&=&f(x)g(x)\\
    &=&A_{0}B_1+(A_{0}B_{0}+A_1B_1)x+(A_1B_{0}+A_0S_0+Z_0B_1)x^2\\&&+(A_0S_1+A_1S_0+Z_0B_0+Z_1B_1)x^3\\&&+(A_1S_1+Z_1B_0+Z_0S_0)x^4+(Z_0S_1+Z_1S_0)x^5+Z_1S_1x^6.
\end{IEEEeqnarray*}
}Now $\deg(h(x))=6$ and $A_{0}B_{0}+A_1B_1$ is exactly the coefficient of the monomial $x$ in $h(x)$, therefore, the recovery threshold is $R_c=7$.

The coefficient $A_{0}B_{0}+A_1B_1$ of the term $x$ in $h(x)$    can be retrieved through
the following steps.

\begin{itemize}
  \item Computation and cooperative phase:  First, server $j$ computes the product of $f(a_j)$ and $g(a_j)$ to obtain $h(a_j)$ for all $j\in [0, N)$. Assume that servers $j_0, j_1, \ldots, j_6$ are the fastest $7$ servers, and further assume that servers $j_{2i}$ and $j_{2i+1}$ seek to cooperate with each other for $i\in [0, 3)$. Second, each server $j_i$ ($i\in [0, 7)$)  computes the polynomial $$\ell^{(j_i)}(x):=\prod_{u=0,u\ne i}^{6}\frac{x-a_{j_u}}{a_{j_i}-a_{j_u}}$$ to obtain the coefficient $\ell^{(j_i)}_{1}$ of the term $x$ in $\ell^{(j_i)}(x)$, and then multiply it with $h(a_{j_i})$.   Finally,  server $j_{2i+1}$ transmits $\ell^{(j_{2i+1})}_{1}h(a_{j_{2i+1}})$ to server $j_{2i}$ for $i\in [0, \lfloor \frac{7}{2}\rfloor)$, who then further computes
\begin{equation*}
 Y_{j_{2i}}=\ell^{(j_{2i})}_{1}h(a_{j_{2i}})+\ell^{(j_{2i+1})}_{1}h(a_{j_{2i+1}}).
\end{equation*}
\item Decoding phase: Server $j_{2i}$  sends $Y_{j_{2i}}$  to the user for $i\in [0, 4)$, where $Y_{j_6}=\ell^{(j_6)}_{1}h(a_{j_6})$.
The user sums up the data they received  and  obtains
\begin{IEEEeqnarray*}{rCl}
 && A_{0}B_{0}+A_1B_1\\&=&\underbrace{\ell^{(j_0)}_{1}h(a_{j_0})+\ell^{(j_1)}_{1}h(a_{j_1})}_{\rm answer~from~server~ 0}+\underbrace{\ell^{(j_2)}_{1}h(a_{j_2})+\ell^{(j_3)}_{1}h(a_{j_3})}_{\rm answer~from~server~ 2}\\&&+\underbrace{\ell^{(j_4)}_{1}h(a_{j_4})+\ell^{(j_5)}_{1}h(a_{j_5})}_{\rm answer~from~server~4}+\underbrace{\ell^{(j_6)}_{1}h(a_{j_6})}_{\rm answer~from~server~6}
\end{IEEEeqnarray*}
according to \eqref{Eqn_L_theta}.
\end{itemize}
Thus, the download cost is $4tr$ and the cooperation cost is $3tr$.

\subsection{A Cooperative SDMM Scheme}%{An SDMM Code under the Server Cooperation Model }
In this subsection, we propose an SDMM code construction under the server cooperation model based on Matdot codes \cite{dutta2020optimal}, \textit{i.e.,} employing IPP. Let us assume that the user is interested in computing $AB$, where $A\in \mathbb{F}_q^{t\times s}$ and $B\in \mathbb{F}_q^{s\times r}$ with the assistance of $N$ servers, while any    $X$ servers can collude to deduce the information of $A$ and $B$. These colluding servers can now also cooperate.

In general, we have the following result for the new cooperative SDMM code.
\begin{Theorem}\label{Thm_C1}
Assume any $X$   servers can collude and  cooperate. Then, there exists an explicit cooperative SDMM scheme by which the product of   $A$ and $B$ can be securely computed with the assistance of $N$ servers, with upload cost  $N(\frac{ts}{p}+\frac{sr}{p})$, download cost  $tr\lceil\frac{R_c}{X}\rceil$, cooperation cost $tr(R_c-\lceil\frac{R_c}{X}\rceil)$, and   recovery threshold  $R_c=2p+2X-1$.
\end{Theorem}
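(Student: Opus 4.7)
The plan is to generalize the motivating example from $p=2$, $X=2$ to arbitrary $p$ and $X$, following the MatDot/IPP blueprint combined with the Lagrange-based cooperation of Proposition~\ref{Prop_lag}. First I would split $A$ column-wise into $p$ blocks $A_0,\dots,A_{p-1}\in\mathbb{F}_q^{t\times s/p}$ and $B$ row-wise into $p$ blocks $B_0,\dots,B_{p-1}\in\mathbb{F}_q^{s/p\times r}$ so that $AB=\sum_{j=0}^{p-1}A_jB_j$. I would then pick $2X$ independent uniform random matrices $Z_0,\dots,Z_{X-1}$ of the same shape as the $A_j$, and $S_0,\dots,S_{X-1}$ of the same shape as the $B_j$, and define the encoding polynomials
\begin{equation*}
f(x)=\sum_{j=0}^{p-1}A_j x^{j}+\sum_{k=0}^{X-1}Z_k x^{p+k},\quad g(x)=\sum_{j=0}^{p-1}B_j x^{p-1-j}+\sum_{k=0}^{X-1}S_k x^{p+k}.
\end{equation*}
The user evaluates at $N$ distinct points $a_0,\dots,a_{N-1}\in\mathbb{F}_q$ and sends $(f(a_i),g(a_i))$ to server $i$. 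Counting symbols, the upload is $N(ts/p+sr/p)$ as claimed.

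Next I would verify security and recoverability. For security, each of $f$ and $g$ restricted to any $X$ evaluation points is obtained by multiplying the $(A_j)$-and-$(Z_k)$ (resp.\ $(B_j)$-and-$(S_k)$) vector by the $X$ columns of a Vandermonde-type generator of an $[N,p+X]$ RS code; the bottom $X$ rows, being the coefficients of $x^{p},\dots,x^{p+X-1}$, form an MDS $[N,X]$ code, so by Corollary~\ref{CorXsecure} (or directly by Proposition~\ref{Prop_PrivStor}) the shares are uniformly distributed and independent of $(A,B)$. For correctness, set $h(x)=f(x)g(x)$; then $\deg h=2p+2X-2$, and the coefficient of $x^{p-1}$ in $h(x)$ is exactly
\begin{equation*}
L_{p-1}=\sum_{j=0}^{p-1}A_jB_j=AB,
\end{equation*}
because all other cross-terms of degree $p-1$ either involve a $Z_k$ or $S_k$ (these sit at powers $\ge p$) or pair $A_j$ with $B_{j'}$ with $j+(p-1-j')\ne p-1$. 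Hence the user only needs the values $h(a_{j_0}),\dots,h(a_{j_{R_c-1}})$ at any $R_c=2p+2X-1$ points to determine $L_{p-1}$.

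Once the picture is reduced to \emph{extracting one Lagrange coefficient} from $R_c$ evaluations of a degree-$(R_c-1)$ polynomial, I can invoke Proposition~\ref{Prop_lag} with $K=R_c$ verbatim. Partition the $R_c$ fastest servers into $\lceil R_c/X\rceil$ cooperation groups of size at most $X$ (consistent with the collusion graph, since the groups respect the $X$-collusion constraint). Each server $j_i$ computes $\ell^{(j_i)}_{p-1}\cdot h(a_{j_i})$, transmits it to the representative of its group, and the representative forwards the sum to the user. Summing the $\lceil R_c/X\rceil$ responses gives $L_{p-1}=AB$ by \eqref{Eqn_L_theta}. Since each transmitted symbol has size $tr$, the download cost is $tr\lceil R_c/X\rceil$ and the cooperation cost is the remaining $tr(R_c-\lceil R_c/X\rceil)$, matching the theorem.

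The only non-routine step I anticipate is checking that no lower-degree term of $h$ accidentally contributes to the $x^{p-1}$ coefficient other than the desired $\sum_j A_jB_j$; this amounts to observing that the random masks are placed at powers $\ge p$ in both $f$ and $g$, so their products (and their cross-products with the information blocks) land at powers $\ge p$ in $h$. Everything else is a direct application of RS/MDS security and Proposition~\ref{Prop_lag}.
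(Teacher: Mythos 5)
Your proof is correct and follows essentially the same route as the paper's: the same MatDot/IPP partition and exponent choices $\alpha_j=j$, $\beta_j=p-1-j$, $\gamma_t=\delta_t=p+t$, the same observation that $\sum_j A_jB_j$ is the coefficient of $x^{p-1}$ in $h=fg$ with $\deg h=2p+2X-2$, security via Corollary~\ref{CorXsecure}, and a verbatim invocation of Proposition~\ref{Prop_lag} with $K=R_c$ to obtain the download and cooperation costs. Your high-level degree bookkeeping (noting that every monomial involving a $Z_t$ or $S_{t'}$ has degree at least $p$) is a streamlined replacement for the paper's explicit expansion of $h(x)$, but the underlying argument is identical.
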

\begin{proof}
%\textbf{A SDMM scheme under server collaboration}
The user partitions the matrices $A$ and $B$ by the inner  product partitioning   as
\begin{equation}\label{Eqn_IPP}
% \nonumber to remove numbering (before each equation)
 A=\left(
     \begin{array}{cccc}
       A_{0} & A_{1} & \cdots & A_{p-1} \\
     \end{array}
   \right),\,\, B=\left(
     \begin{array}{c}
       B_{0}  \\
       B_{1}  \\
      \vdots  \\
      B_{p-1} \\
     \end{array}
   \right),
\end{equation}
where $A_{j}\in \mathbb{F}_q^{t\times \frac{s}{p}}$ and $B_{j}\in \mathbb{F}_q^{\frac{s}{p}\times r}$. Then
\begin{eqnarray}\label{Eqn_AB_mn1}
% \nonumber to remove numbering (before each equation)
  AB &=&  A_{0}B_{0}+A_{1}B_{1}+\cdots+A_{p-1}B_{p-1}.
\end{eqnarray}

Let $Z_{0}, \ldots, Z_{X-1}$ be $X$ random matrices over $\mathbb{F}_q^{t\times \frac{s}{p}}$ and $S_{0}, \ldots, S_{X-1}$ be $X$ random matrices over $\mathbb{F}_q^{\frac{s}{p}\times r}$.   The user encodes the matrices $A$ and $B$ by an $[N, p+X]$ RS code, \textit{i.e.,} creates two polynomials
\begin{eqnarray*}
% \nonumber to remove numbering (before each equation)
  f(x) &=& \sum\limits_{j=0}^{p-1}A_{j}x^{\alpha_{j}}+ \sum\limits_{t=0}^{X-1}Z_tx^{\gamma_t}, \\
   g(x) &=& \sum\limits_{j=0}^{p-1}B_{j}x^{\beta_{j}}+ \sum\limits_{t=0}^{X-1}S_tx^{\delta_t},
\end{eqnarray*}
where  $\alpha_{j}=j$, $\beta_{j}=p-1-j$ for $j\in [0, p)$, $\gamma_t=\delta_t=p+t$ for  $t\in [0,X)$, and then evaluates them at $N$ distinct points
$a_0, \ldots, a_{N-1}$  in $\mathbb{F}_q$. Then, the user sends $f(a_i)$,  $g(a_i)$  to server $i$ for $i\in [0, N)$, 
yielding an upload cost  $N(\frac{ts}{p}+\frac{sr}{p})$.

Let $h(x)=f(x)g(x)$, \textit{i.e.,}
\begin{IEEEeqnarray*}{rCl}
% \nonumber to remove numbering (before each equation)
  h(x)&=&\sum\limits_{j=0}^{p-1}\sum\limits_{j'=0}^{p-1}A_{j}B_{j'}x^{\alpha_{j}+\beta_{j'}}+
 \sum\limits_{j=0}^{p-1}\sum\limits_{t'=0}^{X-1}A_{j}S_{t'}x^{\alpha_{j}+\delta_{t'}}\\&&+ \sum\limits_{t=0}^{X-1}\sum\limits_{j'=0}^{p-1}Z_tB_{j'}x^{\gamma_t+\beta_{j'}}+\sum\limits_{t=0}^{X-1}\sum\limits_{t'=0}^{X-1}Z_tS_{t'}x^{\delta_{t}+\delta_{t'}}\\
 &=&\sum\limits_{j=0}^{p-1}A_{j}B_{j}x^{p-1}+\sum\limits_{j=0}^{p-1}\sum\limits_{j'=0,j'\ne j}^{p-1}A_{j}B_{j'}x^{p-1+j-j'}\\&&+
 \sum\limits_{j=0}^{p-1}\sum\limits_{t'=0}^{X-1}A_{j}S_{t'}x^{p+j+t'}+ \sum\limits_{t=0}^{X-1}\sum\limits_{t'=0}^{X-1}Z_tS_{t'}x^{2p+t+t'}\\&&+\sum\limits_{t=0}^{X-1}\sum\limits_{j'=0}^{p-1}Z_tB_{j'}x^{2p+t-j'-1},
\end{IEEEeqnarray*}
then $\deg(h(x))=2p+2X-2$ and $\sum\limits_{j=0}^{p-1}A_{j}B_{j}$ is exactly the coefficient of the monomial $x^{p-1}$ in $h(x)$. Therefore, computing $AB$ is equivalent to retrieving the coefficient of the monomial $x^{p-1}$ in $h(x)$. For $i\in [0, N)$, server $i$ computes $h(a_i)=f(a_i)g(a_i)$.  Then,  applying Proposition \ref{Prop_lag}, we can get the desired result. The proof for being $X$-secure is guaranteed by Corollary \ref{CorXsecure}. An alternative proof can be found in \cite[Section 4.2]{yu2019lagrange}.
\end{proof}

\begin{Remark}
Furthermore, the above scheme can tolerate $N-R_c$ stragglers, provided that the cooperating servers transmit their ``identity'' to the user, who can then broadcast the respective evaluation points to the servers. The evaluation points are small compared to the matrix coefficients, so the additional communication cost is negligible (cf. footnote on p. 4). 
\end{Remark}

\begin{Remark}\label{Remark_mital}
Note that the non-cooperative SDMM scheme in \cite{mital2020secure} uses the  same  kind  of  partitioning  as the proposed cooperative  scheme. However, it cannot mitigate stragglers, since the desired matrix multiplication is a linear combination of \emph{all} the answers, using the fact that the sum of all the $n$-th roots of unity is zero. It can be converted to a cooperative scheme without straggler protection with a download cost  $tr\lceil\frac{p+2X}{X}\rceil$,  which is lower than the one in Theorem \ref{Thm_C1}.  The proof is analogous to that of Theorem \ref{Thm_C1}. Thus, these two alternative cooperative schemes can be seen as choosing between lower communication  cost and straggler protection.
\end{Remark}

{Even though similar techniques can in principle be applied to many other schemes, this will not automatically give a lower download cost. This can be seen from enabling cooperation in the GASP code \cite{d2021degree}, see Examples \ref{gasp-basic}, \ref{gasp} below.}

\begin{Example}\label{gasp-basic}
Here we show an example of the GASP code presented in \cite{d2021degree}. 

Assume that the user is interested in computing $AB$ for $A\in \F_q^{t\times s}$ and $B\in \F_q^{s\times r}$ with the assistance of $N \ge 11$ servers, while any $X = 2$ servers can collude to deduce the information of $A$ and $B$, where $t,r$ are even.
The user divides the matrices $A$ and $B$ into block matrices as
\begin{eqnarray*}
% \nonumber to remove numbering (before each equation)
 A=
   \begin{pmatrix} A_{0} \\ A_{1} \end{pmatrix}
   \,,\, B=
     \begin{pmatrix}
       B_{0} & B_{1} \\
   \end{pmatrix}
\end{eqnarray*}
where $A_{j}\in \F_q^{\frac{t}{2} \times s}$ for $j=0, 1$ and $B_{k} \in \mathbb{F}_q^{s \times \frac{r}{2}}$ for $k=0, 1$. Then
\begin{eqnarray*}
% \nonumber to remove numbering (before each equation)
  AB &=&  \begin{pmatrix}
    A_{0}B_{0} & A_{0}B_{1} \\
    A_{1}B_{0} & A_{1}B_{1}
  \end{pmatrix}.
\end{eqnarray*}

Let $Z_{0}, Z_{1}$ be two random matrices over $\F_q^{\frac{t}{2} \times s}$ and $S_{0}, S_{1}$ be two random matrices over $\F_q^{s \times \frac{r}{2}}$. The user encodes the matrices $A$ and $B$ by first computing the following polynomials
\begin{eqnarray*}
% \nonumber to remove numbering (before each equation)
  f(x) &=& A_0+A_1x+Z_0x^4+Z_1x^6, \\
   g(x) &=& B_0+B_1x^2+S_0x^4+S_1x^5,
\end{eqnarray*}
and then evaluating them at $N$ distinct points in $\F_q$ for some $r \geq 1$, say $a_0, \ldots, a_{N-1}$. The points are chosen such that the system is decodable, which is always possible when $q$ is sufficiently large. Then the user sends $f(a_i)$, $g(a_i)$ to server $i$ for $i \in [0, N)$, who then computes and returns $f(a_i)g(a_i)=h(a_i)$,
\textit{i.e.,} the upload cost is $N(\frac{ts}{2}+\frac{sr}{2})$, where
{\small
\begin{IEEEeqnarray*}{rCl}
% \nonumber to remove numbering (before each equation)
   h(x)&=& f(x)g(x)\\
    &=& A_0B_0 + A_1B_0x + A_0B_1x^2 + A_1B_1x^3 \\
    &+& (A_0S_0 + Z_0B_0)x^4 + (A_0S_1 + A_1S_0)x^5 \\
    &+& (A_1S_1 + Z_0B_1 + Z_1B_0)x^6 \\
    &+& (Z_0S_0 + Z_1B_1)x^8 + Z_0S_1x^9 +Z_1S_0x^{10} + Z_1S_1x^{11}.
\end{IEEEeqnarray*}
}Suppose that server $ j_0,  j_1, \ldots,  j_{10}$ be the fastest $11$ servers, then the user gets
\begin{equation}\label{Eqn_GASP_SLE}
\begin{pmatrix}
h(a_{j_0})\\
h(a_{j_1})\\
\vdots\\
h(a_{j_{10}})
\end{pmatrix}\hspace{-1mm}=\hspace{-1mm}\underbrace{\begin{psmallmatrix}
1& a_{ j_0} &\cdots&a_{j_0}^6 &a_{j_0}^8 &\cdots &a_{j_0}^{11} \\
1 & a_{j_1} &\cdots&a_{j_1}^6 &a_{j_1}^8 &\cdots &a_{j_1}^{11} \\
\vdots&\vdots&\ddots&\vdots&\vdots&\ddots&\vdots\\
1 & a_{j_{10}} &\cdots&a_{j_{10}}^6 &a_{j_{10}}^8 &\cdots &a_{j_{10}}^{11} 
\end{psmallmatrix}}_{G'}\otimes I_{\frac{t}{2}\times \frac{t}{2}} \begin{pmatrix}
A_0B_0\\
A_1B_0\\
A_0B_1\\
A_1B_1\\
\vdots\\
\end{pmatrix},   
\end{equation}
from which the first $4$ coefficients (\textit{i.e.}, $AB$) of $h(x)$ can be recovered if $G'$ is nonsingular, then the recovery threshold is $R_c=11$ and the download cost is $\frac{11}{4}tr$.
\end{Example}

\begin{Example}\label{gasp}
Following from the previous example, let $G$ be the inverse of $G'$, and let $G_{i,j}$ denote the $(i,j)$-th entry of $G$. Under the server cooperation model, 
the coefficients $A_0B_0$, $A_1B_0$, $A_0B_1$ and $A_1B_1$ of the terms $1$, $x$, $x^2$, and $x^3$ in $h(x)$ can be retrieved through the following steps.

\begin{table*}[ht]
\begin{center}
\caption{A comparison of the communication costs including both the upload and download costs between the proposed cooperative SDMM codes and previous non-cooperative ones. We assume $p=m^2$ and $m=n$ for convenience.  Mital \emph{et al.} code 1 and its cooperative version cannot mitigate stragglers, {while all the other schemes can}. 
}
\label{comp_upload+download}\setlength{\tabcolsep}{3pt}
\begin{tabular}{|c|c|c|c|}
\hline  & Communication costs&\multirow{2}{*}{$R_c$ (Recovery Threshold)}&\multirow{2}{*}{References} \\
&(Upload cost $+$ Download cost) & & \\
\hline Chang-Tandon code  & $N(\frac{ts}{m}+\frac{sr}{m})+R_c\frac{tr}{m^2}$ & $(m+X)^2$ & \cite{chang2018capacity}\\
\hline Kakar \emph{et al.} code 1  & $N(\frac{ts}{m}+\frac{sr}{n})+R_c\frac{tr}{mn}$ & $(m+X)(n+1)-1$ & \cite{kakar2019capacity}\\
\hline GASP code  & $N(\frac{ts}{m}+\frac{sr}{n})+R_c\frac{tr}{mn}$ & $\ge mn+\max\{m, n\}+2X-1$ & \cite{d2021degree}\\
\hline SGPD code (OPP) & $N(\frac{ts}{m}+\frac{sr}{n})+R_c\frac{tr}{mn}$ & $
 mn+m+nX+2X-1$ & \cite{aliasgari2020private} \\
\hline SGPD code (IPP) & $N(\frac{ts}{p}+\frac{sr}{p})+R_ctr$ & $
 2p+2X-1$ & \cite{aliasgari2020private} \\
\hline Secure Entangled Polynomial code (IPP) & $N(\frac{ts}{p}+\frac{sr}{p})+R_ctr$ & $2p+2X-1$ & \cite{yu2020straggler} \\
\hline  Mital \emph{et al.} code 1&  $N(\frac{ts}{p}+\frac{sr}{p})+R_ctr$ & $p+2X$ ($=N$) & \cite{mital2020secure} \\
\hline The cooperative SDMM code based on Matdot codes in \cite{dutta2020optimal} & $N(\frac{ts}{p}+\frac{sr}{p})+\lceil\frac{R_c}{X}\rceil tr$ & $2p+2X-1$ &Theorem \ref{Thm_C1} \\
\hline The cooperative SDMM code based on Mital \emph{et al.} code 1&  $N(\frac{ts}{p}+\frac{sr}{p})+\lceil\frac{R_c}{X}\rceil tr$ & $p+2X$ ($=N$) & Remark \ref{Remark_mital} \\
\hline
\end{tabular}
\end{center}
\end{table*}

\begin{figure*}[htbp]
\centering
\hspace{-10mm}
\begin{minipage}[t]{0.45\textwidth}
%\centering
\includegraphics[scale=.6]{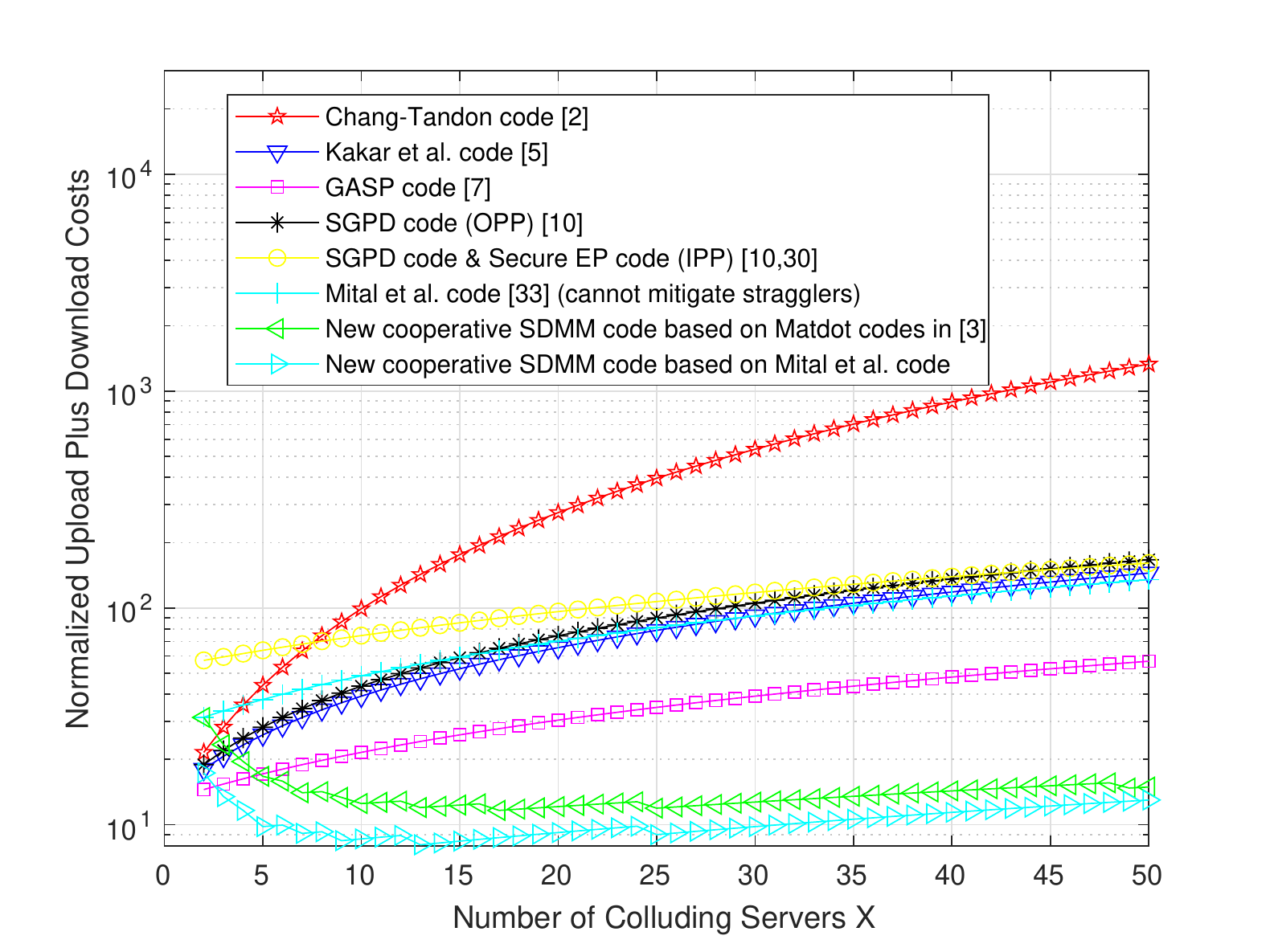}
\end{minipage}
\hspace{5mm}%\hfill
\begin{minipage}[t]{0.45\textwidth}
%\centering
\includegraphics[scale=.6]{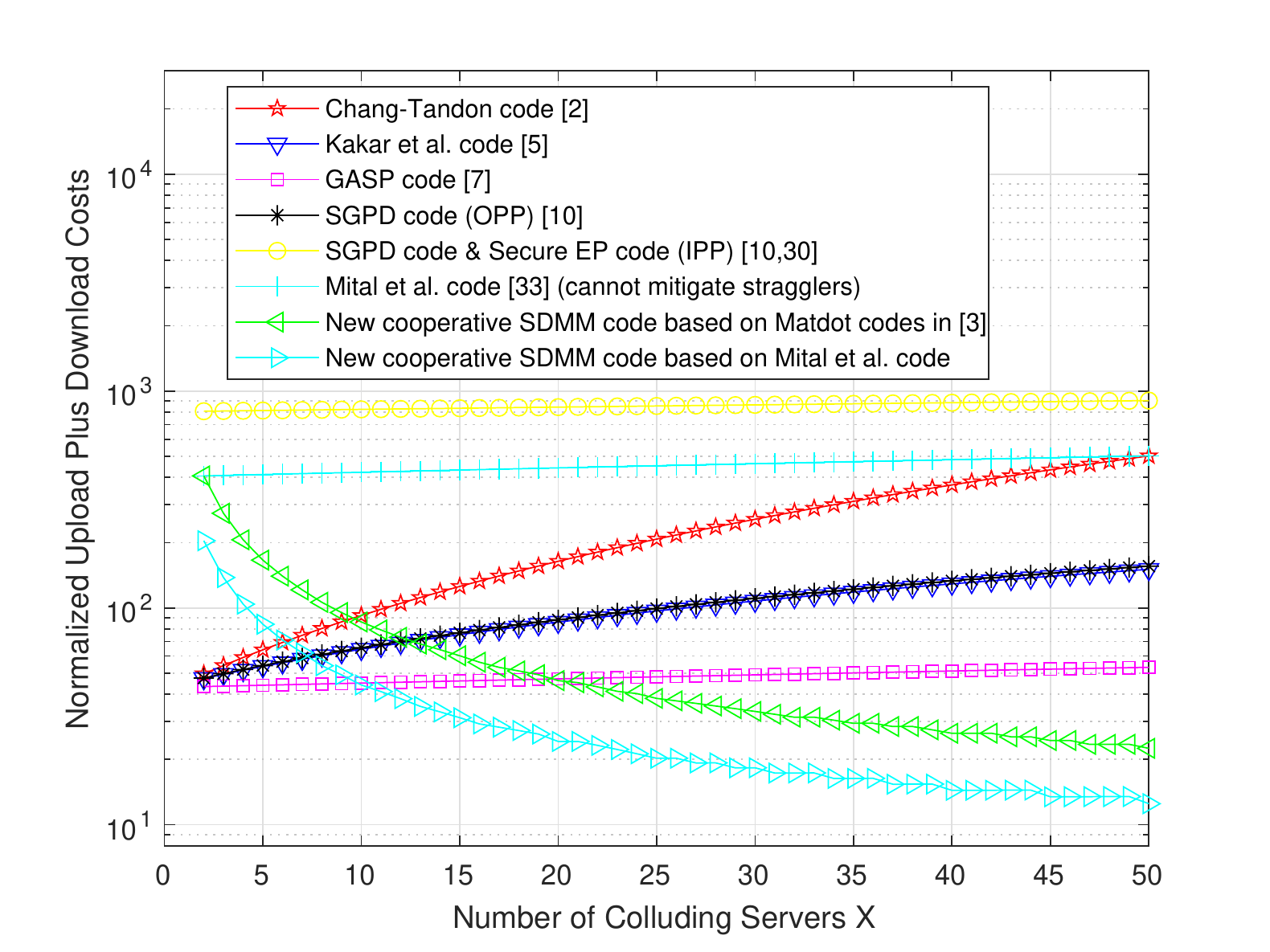}
\end{minipage}
\caption{
Comparison of the normalized communication cost (upload plus download) and the number of colluding servers between the cooperative SDMM code and previous schemes without server cooperation, where $m=5$ (left) and $m=20$ (right),  respectively, and we assume $m=n$, $p=m^2$, and the matrices $A$ and $B$ are square and of the same size (\textit{i.e.}, $t=r=s$) for simplicity. 
The normalization of the communication cost is over $|AB|=|A|=|B|$.}
\label{picture_total}
\end{figure*}

\begin{itemize}
  \item Computation and cooperative phase:  First, server $j$ computes the product of $f(a_j)$ and $g(a_j)$ to obtain $h(a_j)$ for all $j\in [0, N)$. Assume that servers $j_0, j_1, \ldots, j_{10}$ are the fastest $11$ servers, and further assume that servers $j_{2i}$ and $j_{2i+1}$ seek to cooperate with each other for $i\in [0, 5)$. Second, each server $j_i$ ($i\in [0, 11)$)  computes the inverse of $G'$ to obtain $G_{t,i}$, $t\in [0,4)$ and then multiply them with $h(a_{j_i})$.   Finally,  server $j_{2i+1}$ transmits $G_{t,2i+1}h(a_{j_{2i+1}})$, $t\in [0, 4)$,    to server $j_{2i}$ for $i\in [0, \lfloor \frac{11}{2}\rfloor)$, who then further computes
\begin{align*}
 Y_{j_{2i},t}&=G_{t,2i}h(a_{j_{2i}})+G_{t,2i+1}h(a_{j_{2i+1}}), t\in [0, 4).
\end{align*}
\item Decoding phase: Server $j_{2i}$  sends $Y_{j_{2i},t}$,  $t\in [0, 4)$, to the user for $i\in [0, 6)$, where $Y_{j_{10},t}=G_{t,10}h(a_{j_{10}})$.
The user sums up the data they received  and  obtains
\begin{equation*} 
A_{1}B_{0}=Y_{j_0,1}+Y_{j_2,1}+Y_{j_4,1}+Y_{j_6,1}+Y_{j_8,1}+Y_{j_{10},1}
\end{equation*}
according to \eqref{Eqn_GASP_SLE} (after multiplying $G\otimes I_{\frac{t}{2}\times \frac{t}{2}}$ from both sides).  $A_0B_0$,  $A_0B_1$, and $A_1B_1$ can be similarly obtained by calculating
$\sum\limits_{i=0}^{4}Y_{2i,0}$, $\sum\limits_{i=0}^{4}Y_{2i,2}$, and $\sum\limits_{i=0}^{4}Y_{2i,3}$, respectively.
\end{itemize}
Thus, the download cost is $6tr$ and the cooperation cost is $5tr$. Whereas, the download cost is $\frac{11}{4}tr$ in the non-cooperative version. 
\end{Example}
\begin{Remark}
Note that we can also enable the cooperation in other previous SDMM schemes according to Proposition \ref{Prop_lag}, such as the ones mentioned in Table \ref{comp}. However, as the other SDMM schemes use totally different matrix partitioning, then the user needs to retrieve $mn$ coefficients of a specific polynomial, see Example \ref{gasp}. Then,   Proposition \ref{Prop_lag} should be applied $mn$ times to obtain all the $mn$ coefficients. A straightforward calculation easily shows that the download cost of all the schemes in Table \ref{comp} is $tr\lceil\frac{R_c}{X}\rceil$ if cooperation is enabled, but the recovery threshold $R_c$ is different for different schemes. From Table \ref{comp}, we see that the other SDMM schemes can get a gain in the download cost when $X\ge mn$ (or $X\ge m^2$ for Chang--Tandon code) if enabling the cooperation strategy in this work.
Thus, the cooperation strategy in this work may not always be  efficient for the other schemes that use different matrix partitioning, especially if $X<mn$. In this case, we provide a cooperative SDMM scheme based on encryption from a computational secure perspective in the next section.
\end{Remark}

\subsection{Comparison}\label{sec:comp}

In this subsection, we make comparisons of the communication cost (including the upload and download costs) between the proposed cooperative SDMM schemes and some existing ones without cooperation.  Details are provided in Table \ref{comp_upload+download}. 

As different SDMM schemes in the literature employ different matrix partitioning,  to give a fair comparison,  we assume that each server performs the same amount of computations.  
 
 Similarly to \cite{d2021degree}, we consider the OPP given by setting $p=1$ in
\eqref{Eqn_partition_AB}
and the IPP given by
\eqref{Eqn_IPP}.
Clearly, for schemes based on OPP, each server needs $\frac{t}{m}  s\frac{r}{n}=\frac{tsr}{mn}$ scalar multiplications. 
 For schemes based on IPP each server needs $t\frac{s}{p}  r=\frac{tsr}{p}$ scalar multiplications. 
Thus $p=mn$ under the assumption that each server performs the same amount of computations. For convenience, we further assume $m=n$ and thus $p=m^2$.  Figure \ref{picture_total} visualizes the comparison for $m=5$ and $m=20$ by further assuming $N=R_c$. The total communication cost has been normalized with respect to the matrix product size.

From   Figure \ref{picture_total}, we can see that the new SDMM schemes under server cooperation (\textit{i.e.}, the scheme based on the Matdot codes in \cite{dutta2020optimal} and the one based on Mital \textit{et al.} code 1 in \cite{mital2020secure}) can gain the following advantages:

\begin{itemize}
\item The SDMM codes under the server cooperation model always have a significant gain in the communication cost when compared with the ones that employing exactly the same matrix partitioning method (\textit{i.e.,} the SGPD code (IPP) \cite{aliasgari2020private}, the secure Entangled Polynomial code \cite{yu2020entangled}, and Mital \textit{et al.} code 1 \cite{mital2020secure}) as ours. 

\item When compared with the other SDMM schemes that employ totally different matrix partitioning (\textit{i.e.,} OPP), the SDMM codes under server cooperation model have a smaller normalized communication cost when $X$ is larger than a threshold. 
\end{itemize}

Although the cooperative one based on Mital \textit{et al.} code 1 in \cite{mital2020secure} has a smaller communication cost than the one based on the Matdot codes in \cite{dutta2020optimal}, it comes with a penalty that can not mitigate stragglers.) 

\subsection{PIR under  server cooperation model}\label{sec:IT-XTPIR}
\begin{comment}
In this subsection, we focus on XTPIR under  server cooperation model.

Suppose there are $m$ files $x^{0}, x^{1}, \ldots, x^{m-1}\in \mathbb{F}_q^{t\times r}$, each of size $tr$. Let $s=mt$ and
\begin{equation*}
    B=\begin{pmatrix}
    x^{0}\\ x^{1}\\ \vdots\\ x^{m-1}\\ 
    \end{pmatrix}\in \mathbb{F}_q^{s\times r}.
\end{equation*}

These $m$ files are required to be stored across $N$ servers in an $X$-secure form, \emph{i.e.,} any $X$ servers can learn nothing about the files. In addition, any $X$ servers can collude to deduce the query and can also cooperate. Suppose the user wants to retrieve the $i$--th file $x^{i}\in \mathbb{F}_q^{t\times r}$, let
\begin{equation*}
    A=\begin{pmatrix}
\textbf{0}_{t\times it} & I_{t\times t} & \textbf{0}_{t\times (m-i-1)t}
    \end{pmatrix}\in \mathbb{F}_q^{t\times s},
\end{equation*}
then we have $x^i=AB$, that is, retrieving the $i$--th file is equivalent to multiplying the matrices $A$ and $B$ such that any $X$ servers learn nothing about neither $A$   nor $B$. \ch{As usual in the PIR literature, the upload cost can be ignored since the elements of the matrix $A$ are in the base field, while the elements of the matrix $B$ are in an extension field.}[\ch{Is it fair here to say this, I guess so? OLIVER: please write a better justification!}] \jl{I think it is fair to say so, but what if $B$ is in a prime field?}
\end{comment}

In this subsection, we focus on the XTPIR problem as in \cite{jia2020x} but under the server cooperation model.
Suppose there are $m$ files $x^{0}, x^{1}, \ldots, x^{m-1}\in \mathbb{F}_q^{r}$, each of size $r$ stored in the rows of an $m \times r$ matrix $B$. We assume the matrix $B$ is available at the servers. Then the PIR problem of retrieving the $i$--th file is equivalent to securely computing the matrix product $e_i^TB=x^{i}$.
Sometimes it is necessary to divide a file into several stripes, \emph{i.e.},  each file will be of size $x^{i} \in \mathbb{F}_q^{\frac{s}{m} \times r}$ and hence $B \in \mathbb{F}_q^{s \times r}$. In that case the retrieval of a file is achieved by computing the product $(e_i^T \otimes I_{\frac{s}{m}})B=x^i$.
The upload cost is of the order of the number of files, while the download cost is in the order of the size of the files. It is hence usual to ignore upload cost and use download cost as the sole parameter of interest.

Analogously to Theorem \ref{Thm_C1}, we immediately have the following result.

\begin{Theorem}\label{Thm_PIR}
Assume any $X$ servers can collude and  cooperate. Then, there exists an explicit cooperative XTPIR scheme with $X=T$ and download cost  $sr\lceil\frac{2p+2X-1}{X}\rceil$, where each file is of size $sr$, \emph{i.e.,} the PIR rate is $R_c=\frac{1}{\lceil\frac{2p+2X-1}{X}\rceil}$.
\end{Theorem}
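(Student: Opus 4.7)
The plan is to reduce the XTPIR problem to the cooperative SDMM scheme already analyzed in Theorem \ref{Thm_C1}, using the equivalence of retrieving file $x^i$ and computing a matrix product $AB = x^i$ where the query $A$ is a suitable slice of an identity matrix (e.g., $A = e_i^T \otimes I_{s/m}$ when striping is used). Concretely, the database $B$ is encoded \emph{offline} by the polynomial $g(x)$ from the proof of Theorem \ref{Thm_C1}: partition $B$ row-wise into $p$ blocks $B_0, \ldots, B_{p-1}$, draw $X$ independent uniform pads $S_0, \ldots, S_{X-1}$, and place the share $g(a_i) = \sum_{j=0}^{p-1} B_j a_i^{p-1-j} + \sum_{t=0}^{X-1} S_t a_i^{p+t}$ on server $i$. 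By Corollary \ref{CorXsecure} this storage is $X$-secure.

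For online retrieval, the user partitions the indicator query $A$ conformally with $B$ and encodes it as $f(x) = \sum_{j=0}^{p-1} A_j x^j + \sum_{t=0}^{X-1} Z_t x^{p+t}$, sending $f(a_i)$ to server $i$; the same argument used in Theorem \ref{Thm_C1} gives $T$-privacy of the requested index with $T = X$. Server $i$ returns $h(a_i) = f(a_i) g(a_i)$, and exactly as in the SDMM case the coefficient of $x^{p-1}$ in $h(x)$ equals $\sum_{j=0}^{p-1} A_j B_j = AB = x^i$, so recovering this one coefficient finishes the task.

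The cooperative download is then handled by Proposition \ref{Prop_lag} applied coordinate-wise to the target coefficient: the fastest $R_c = 2p + 2X - 1$ servers form groups of size at most $X$, each group aggregates its Lagrange-weighted shares internally, and each group's representative sends its partial sum to the user, who adds them up to obtain $x^i$. This yields the claimed download cost $sr\lceil (2p + 2X - 1)/X\rceil$ and PIR rate $1/\lceil (2p + 2X - 1)/X\rceil$. The argument is essentially a repackaging of Theorem \ref{Thm_C1}, so the only conceptual point to verify is that the offline storage format of $B$ can be fixed in a way compatible with every possible online query; since the partition, evaluation points, and number of pads are chosen once and do not depend on the requested index $i$, this is immediate, and no further obstacle arises.
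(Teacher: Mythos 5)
Your proposal is correct and follows exactly the approach the paper intends: the paper's ``proof'' is the single line ``Analogously to Theorem~\ref{Thm_C1}, we immediately have the following result,'' after having set up the identification of the $i$-th file with the product $(e_i^T \otimes I_{s/m})B$, and you have simply spelled out that implicit reduction (offline $g$-encoding of $B$, online $f$-encoding of the indicator query, privacy by Corollary~\ref{CorXsecure}, cooperative download via Proposition~\ref{Prop_lag}). Your closing observation that the offline storage format is query-independent is a useful explicit check that the paper leaves unstated, but it introduces no new idea beyond the paper's intended argument.
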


\section{Computationally Secure Cooperation}\label{sec:enc_coop}

In this section, we generalize the previous scheme to work in the computationally secure setting and show how SDMM schemes can be constructed with the computationally secure cooperative model.

Particularly, we assume that the collusion and cooperation graphs are independent of each other. The motivation for this comes from the fact that a cooperating server is only exchanging encrypted data, hence the other servers cannot infer any information (it is computationally hard). Colluding servers, for their part, may actively try to infer information and in addition exchange their used seeds, hence violating the protocol. In addition, we assume that all the servers have the capability to cooperate with each other, but only $X$ of them will collude with each other, along the lines discussed above.

\subsection{Cooperative Retrieval with Encryption}
Communication between servers does not  necessarily have to be considered as collusion if the data that is shared is properly encrypted. When a cooperating server is  sharing encrypted data that the other servers cannot decrypt, possible collusion will yield no outcome and is hence equivalent to no collusion.  If servers wish to (successfully) collude, they must break the assumption of being honest-but-curious and actively share data that should remain private.

The next example shows how a one-time pad can be used to secretly recover a shared secret using server cooperation. 

\begin{Example}
Let $y_0, \dots, y_{K-1} \in \F_q^n$ be the recovery set of a linear secret sharing scheme and $\alpha_i \in \F_q$ be the coefficients for recovery. Then the secret can be reconstructed as $m = \sum_i \alpha_i y_i$, where the $\alpha_i$'s are assumed to be known to all parties. Using cooperating servers, the secret can be recovered by downloading from just 1 server by offloading some of the communication to the servers.

Each server encrypts their share $y_i$ using a one-time pad. Hence, they get the value $z_i = y_i + r_i$, where $r_i$ is independently chosen uniformly at random and only known to server $i$. The value $z_i$ can be shared freely to anyone not knowing the key $r_i$ without leaking information. Each server sends their $z_i$ to a specified server that computes the linear combination
\begin{equation*}
    \sum_i \alpha_i z_i
\end{equation*}
and sends this to the user. Now the user can compute
\begin{equation*}
    \sum_i \alpha_i z_i - \sum_i \alpha_i r_i = \sum_i \alpha_i y_i = m.
\end{equation*}
To do this the user needs to know the one-time pads $r_i$. 
\end{Example}

To know the one-time pads of each server, the user needs to download them from each server securely. If such a secure download were possible, then it would be more convenient to just download $y_i$ instead, since $r_i$ is no smaller than $y_i$. Therefore, it is not practical to use a one-time pad to efficiently recover a shared secret. The following Proposition combines Proposition \ref{Prop_lag} with the encryption scheme described in Construction \ref{Con_enc}.

\begin{Proposition}\label{Prop_enc}
Assume there are $N$ servers that can all cooperate. Assume that $y_i$ is the evaluation of some polynomial $L(x)$ at $x_i$ for $i \in [0, N)$, where $\deg(L(x)) = K-1\in[0, N)$. If server $i$ has the data $x_0, \dots, x_{N-1}$ and $y_i$ for $i \in [0, N)$, then the user can obtain one of the coefficients $L_\theta$, where $\theta \in [0, K)$, with the download cost being $|L_\theta|$.
\end{Proposition}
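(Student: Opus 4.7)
The plan is to recover $L_\theta$ by combining the Lagrange interpolation identity from Proposition~\ref{Prop_lag} with the symmetric encryption of Construction~\ref{Con_enc}: each server masks its Lagrange contribution with a short pseudorandom pad, a single cooperation leader homomorphically adds the ciphertexts, and the user, who holds every key, strips off the combined pad. Because a single ciphertext of plaintext length $|L_\theta|$ is transmitted, the download cost is $|L_\theta|$; the additional overhead (the keys and, if Construction~\ref{Con_enc} is instantiated in randomised form, a nonce $r\in\F_q^n$) is of the order of the security parameter and is therefore negligible compared with $|L_\theta|$, exactly as the evaluation points $a_i$ were neglected in the upload cost earlier in the paper.

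Concretely, the steps I would carry out are as follows. During the upload phase the user runs $\Gen(1^n)$ once for each server and delivers a fresh key $k_i$ to server $i$. In the cooperative phase, server $i$ first forms the Lagrange contribution $P_i := y_i\,\ell^{(i)}_\theta$, using the coefficients $\ell^{(i)}_\theta$ of \eqref{Eqn_ell2}; by \eqref{Eqn_L_theta} these satisfy $\sum_{i=0}^{K-1} P_i = L_\theta$. Server $i$ then applies the variable output-length PRF $f$ from Definition~\ref{Def_PRF} to produce a pad $z_i \in \F_q^{|P_i|}$ of the same size as $P_i$, and sends the ciphertext $c_i := P_i + z_i$ to a pre-selected cooperation leader, say server $0$. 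The leader exploits the additive structure of the stream cipher in Construction~\ref{Con_enc} to compute
\begin{equation*}
C \;:=\; \sum_{i=0}^{K-1} c_i \;=\; L_\theta \,+\, \sum_{i=0}^{K-1} z_i
\end{equation*}
and forwards only the single value $C$ to the user. The user, knowing all keys $k_0,\dots,k_{K-1}$, recomputes the pads $z_i$ and recovers $L_\theta = C - \sum_i z_i$, so a single symbol of size $|L_\theta|$ is downloaded, as claimed.

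The step I expect to require the most care is verifying that the cooperation traffic genuinely preserves security, since correctness is essentially immediate from the linearity of the pad. The argument is that by the IND-CPA property of Construction~\ref{Con_enc} (Definition~\ref{Def_ind-cpa}), each ciphertext $c_i$ is computationally indistinguishable from a uniformly random element of $\F_q^{|P_i|}$ for any adversary that does not possess $k_i$. Hence the intermediate messages $c_i$ carry no usable information about the $P_i$ to the other servers, and the aggregated value $C$ is likewise indistinguishable from uniform to anyone ignorant of the combined pad $\sum_i z_i$. The cooperation graph can therefore be assumed independent of the collusion graph, and the scheme leaks no more information about $L_\theta$ than the underlying linear secret sharing of Proposition~\ref{Prop_lag} already permits, completing the proof plan.
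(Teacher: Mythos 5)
Your proof is correct and follows essentially the same route as the paper: each server pads its contribution with a PRF output, a single representative sums the ciphertexts, and the user, knowing the keys, subtracts the combined pad. The only differences are cosmetic bookkeeping choices — you have the user generate and deliver the keys $k_i$, and you apply the Lagrange coefficient $\ell^{(i)}_\theta$ at server $i$ before masking, whereas the paper has each server generate its own key and send it to the user, and has the representative server apply the Lagrange coefficients to the received ciphertexts $z_i = y_i + r_i$; both variants give identical download and cooperation costs and the same security argument.
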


\begin{proof}
The coefficient can be obtained using encryption and cooperation like follows.
\begin{itemize}
    \item Encryption phase: Each server $i \in [0, N)$ chooses a random key $k_i$  and uses a pseudorandom function to compute a random value $r_i$ of the same size as $y_i$. 
The pseudorandomness property of the PRF means that the entries in $r_i$ are (pseudo) uniformly distributed. Server $i$ then computes $z_i = y_i + r_i$ and transmits this to a representative server.
    Server $i$ also transmits their secret key $k_i$ to the user. 
    \item Cooperation phase: W.l.o.g., assume that the $K$ first servers are the fastest ones and the coefficient is obtained from their answers. Once the representative server has received the $K-1$ first responses and their own, they compute the polynomial $\ell^{(i)}(x)$ in \eqref{Eqn_ell} for $i \in [0, K)$ using the points $x_0, \dots, x_{N-1}$, and then compute the value
    \begin{equation*}
        \sum_{i=0}^{K-1} \ell^{(i)}_\theta z_i.
    \end{equation*}
    The representative server then transmits this to the user.
    \item Decryption phase: Using the secret keys $k_i$ for $i \in [0, K)$ the user is able to compute each $r_i$ using the same PRF the servers used. Then the user computes the polynomials $\ell^{(i)}(x)$ in \eqref{Eqn_ell} for $i \in [0, K)$ and computes
\begin{equation*}
    \sum_{i=0}^{K-1} \ell^{(i)}_\theta z_i - \sum_{i=0}^{K-1} \ell^{(i)}_\theta r_i = \sum_{i=0}^{K-1} \ell^{(i)}_\theta y_i = L_\theta
\end{equation*}
according to \eqref{Eqn_L_theta}.
\end{itemize}

The total download cost for the user is $|L_\theta|$ and the sizes of the keys, which are assumed to be small compared to $|L_\theta|$ and the cooperation cost is $(K-1)|L_\theta|$, since the representative server downloads from $K-1$ other servers.
\end{proof}

The total download cost for the users and the servers is now $K|L_\theta|$, which is the same as without cooperation. Hence, no additional communication is introduced by using Proposition \ref{Prop_enc}. The computational complexity is increased slightly from the non-cooperative scheme, since each server has to generate the pseudorandom matrix and add that to the share. Additionally, the collector server needs to compute the linear combination. Finally, the user needs to compute the linear combination on the pseudorandom matrices and subtract that from the result. This increase is only minor if there is an efficient way of producing pseudorandom matrices.

\subsection{Example of the GASP Code under Server Cooperation with Encryption}

Following Example \ref{gasp-basic}, under the server cooperation model from a computational secure perspective,
the coefficients $A_0B_0$, $A_1B_0$, $A_0B_1$ and $A_1B_1$ of the terms $1$, $x$, $x^2$, and $x^3$ in $h(x)$ can be retrieved through the following steps.
\begin{itemize}
    \item Encryption phase: First, server $j$ computes the product of $f(a_j)$ and $g(a_j)$ to obtain $h(a_j)$ for all $j \in [0, N)$. Assume that servers $j_0, j_1, \ldots, j_{10}$ are the fastest $11$ servers. Each server $j_i$ then chooses a uniformly random key $k_{j_i}$ and uses that to compute a pseudorandom value $r_{j_i} \in \F_q^{\frac{t}{2} \times \frac{r}{2}}$ using a pseudorandom function. 
    \item Cooperation phase: Server $j_i$ transmits $h(a_{j_i}) + r_{j_i}$ to server $j_0$ and $k_{j_i}$ to the user. The cooperation cost is $10\cdot\frac{tr}{4} = \frac{5}{2}tr$. Server $j_0$  obtains
\begin{equation} 
\begin{psmallmatrix}
h(a_{j_0})+r_{j_0}\\
h(a_{j_1})+r_{j_1}\\
\vdots\\
h(a_{j_{10}})+r_{j_{10}}
\end{psmallmatrix}={G'}\otimes I_{\frac{t}{2}\times \frac{t}{2}} \begin{psmallmatrix}
A_0B_0\\
A_1B_0\\
A_0B_1\\
A_1B_1\\
\vdots\\
\end{psmallmatrix}+\begin{psmallmatrix}
r_{j_0}\\
r_{j_1}\\
\vdots\\
r_{j_{10}}
\end{psmallmatrix},   
\end{equation}
where the left hand side is known. By calculation the inverse $G$ of $G'$, server $j_0$  obtains
\begin{equation}\label{Eqn_Inverse_CS} 
\begin{split}
&(G\otimes I_{\frac{t}{2}\times \frac{t}{2}})\begin{pmatrix}
h(a_{j_0})+r_{j_0}\\
h(a_{j_1})+r_{j_1}\\
\vdots\\
h(a_{j_{10}})+r_{j_{10}}
\end{pmatrix} \\
&=\begin{pmatrix}
A_0B_0\\
A_1B_0\\
A_0B_1\\
A_1B_1\\
\vdots\\
\end{pmatrix}+(G\otimes I_{\frac{t}{2}\times \frac{t}{2}})\begin{pmatrix}
r_{j_0}\\
r_{j_1}\\
\vdots\\
r_{j_{10}}
\end{pmatrix},  
\end{split}
\end{equation}
Server $j_0$ sends the first four block entries of the vector in the LHS of \eqref{Eqn_Inverse_CS}  to the user. Each block has size $\frac{tr}{4}$, so the total size is $tr$.
    \item Decryption phase: The user uses the secret keys $k_{j_i}$ to compute the pseudorandom matrices $r_{j_i}$ and computes the first four block entries of 
    \begin{equation}\label{Eqn_gasp_random}
        (G\otimes I_{\frac{t}{2}\times \frac{t}{2}})\begin{pmatrix}
r_{j_0}\\
r_{j_1}\\
\vdots\\
r_{j_{10}}
\end{pmatrix}
    \end{equation}
    Then the user is able to obtain $AB$ by subtracting equations \eqref{Eqn_Inverse_CS} and \eqref{Eqn_gasp_random} and rearranging the blocks.

\end{itemize}
Thus, the download cost for the user is $tr$ plus the size of the secret keys, which can be considered negligible (some hundreds of bits) compared to the (presumably large) matrices.

\subsection{Encryption-based Cooperative SDMM Scheme}

The encryption-based cooperation can be adopted to other SDMM schemes as well by applying Proposition \ref{Prop_enc}, similarly to the previous example. 

\begin{Theorem}\label{Thm_enc_SDMM}
Consider an non-cooperative SDMM scheme with recovery threshold $R_c$, upload cost $C_u$, and download cost $C_d=trR_c$. If all servers can cooperate then there is a cooperative SDMM scheme with upload cost $C_u$, download cost $tr$, cooperation cost $\frac{R_c - 1}{R_c}C_d$, and recovery threshold $R_c$, while the scheme is computationally secure.
\end{Theorem}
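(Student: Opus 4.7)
The plan is to lift Proposition \ref{Prop_enc} from the Lagrange-interpolation setting to an arbitrary SDMM response polynomial, following the template already illustrated in the GASP-based example that precedes the theorem. First I would reuse the entire upload and local-computation phase of the baseline scheme unchanged, so that the user sends out the same $\tilde{A}_i,\tilde{B}_i$ (at cost $C_u$, and inheriting its $X$-security for $A$ and $B$) and each of the $R_c$ fastest servers produces its usual answer $h(a_i)$ of size $tr$. Since the baseline recovers $AB$ from these $R_c$ answers by a publicly known linear combination (the Lagrange coefficients, or equivalently the relevant rows of an inverse Vandermonde as in Proposition \ref{Prop_lag}), the remaining task is merely to deliver the \emph{value} of that linear combination to the user, not the individual $h(a_i)$.

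Next I would replace the raw transmission of the $h(a_i)$ by the encrypt-then-aggregate pattern of Proposition \ref{Prop_enc}. Each responding server $i$ samples a short key $k_i$, uses the PRF of Construction \ref{Con_enc} to expand it into a pseudorandom matrix $r_i$ of the same shape as $h(a_i)$, sends the ciphertext $z_i=h(a_i)+r_i$ to a single designated representative server, and sends $k_i$ directly to the user. The representative computes the chosen linear combination $Z$ of the $z_i$ and forwards the single $tr$-sized matrix $Z$ to the user, who regenerates the $r_i$ from the $k_i$, forms the identical linear combination of the pads, and subtracts it from $Z$ to obtain $AB$ by linearity of the decoder.

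Accounting is then immediate. The upload is untouched at $C_u$. The user receives a single $tr$-sized block from the representative plus $R_c$ negligibly small keys, giving download cost $tr$. The $R_c-1$ non-representative responders each transmit one $tr$-sized ciphertext to the representative, so the cooperation cost is $(R_c-1)\,tr = \tfrac{R_c-1}{R_c}\,C_d$. The recovery threshold is inherited from the baseline, since decoding uses exactly the same linear combination over exactly $R_c$ contributions.

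The step I expect to be the main obstacle is the formal security argument. Beyond the baseline $X$-security of the shares, one must show that the added inter-server and server-to-user messages give no efficient adversary additional leverage. A standard hybrid does this: replacing each PRF pad $r_i$ by a truly uniform pad turns every $z_i$ into a perfect one-time-pad encryption of $h(a_i)$, against which any coalition not holding $k_i$ is powerless. By the IND-CPA guarantee of Construction \ref{Con_enc} (Definition \ref{Def_ind-cpa}), the real and idealised transcripts are computationally indistinguishable for any coalition of at most $X$ servers, so the composite protocol is computationally secure, establishing the theorem.
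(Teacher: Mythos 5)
Your proposal matches the paper's proof essentially step for step: reuse the baseline upload and local multiplication unchanged, have each responding server one-time-pad its answer via the PRF of Construction \ref{Con_enc}, ship the ciphertexts to a single representative who performs the baseline interpolation on them and returns the $tr$-sized result, then let the user subtract off the interpolated pads using the short keys. The only cosmetic difference is that you spell out the hybrid/IND-CPA argument a bit more explicitly than the paper's brief statement that the encryption's computational security preserves $X$-security, but it is the same reasoning.
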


\begin{proof}
The beginning of the new scheme is the same as the original scheme. The server $j$ receives the encoded matrices $\Tilde{A}_j$ and $\Tilde{B}_j$, which are multiplied to get $\Tilde{A}_j\Tilde{B}_j$. Let $j_0, \dots, j_{R_c - 1}$ be the fastest $R_c$ servers that are used for the recovery of the answer. Server $j_i$ encrypts $\Tilde{A}_i\Tilde{B}_i$ using Construction \ref{Con_enc}. The ciphertext has the same size as the product $\Tilde{A}_{j_i}\Tilde{B}_{j_i}$. The cost of transmitting the encryptions to server $j_0$ is $\frac{R_c - 1}{R_c}C_d$, since server $j_0$ doesn't need to send anything to itself. Server $j_0$ then performs the interpolation with the ciphertexts and gets a result of size $tr$. This is then transmitted to the user who can recover the product $AB$ by interpolating with the pseudorandom matrices from the encryption and subtracting the results.

In addition to the communication described above, the keys used in the encryption and the evaluation points need to be communicated. However, the sizes are small and not proportional to the sizes of the matrices, so we ignore them here.

The new cooperative scheme is $X$-secure until the cooperative step, since the original scheme is also $X$-secure. The encryption scheme is computationally secure according to Definition \ref{Def_ind-cpa}, so the $X$-security is not broken if the adversary is assumed to be computationally bounded.
\end{proof}

\begin{Remark}
If all servers cannot cooperate with each other, then the above scheme can be modified so  that one representative server from each cooperating set clusters works together with the other representatives, similar to Theorem \ref{Thm_C1}. Then the download cost is $Ctr$, where $C$ is the number of cooperating clusters, and the cooperation cost is $\frac{R_c - C}{R_c} C_d$. Notice that the cooperating sets and colluding sets need not be the same, as is assumed in Theorem \ref{Thm_C1}.
\end{Remark}

\begin{Corollary}\label{Cor_enc_matdot}
The secure MatDot code used in Theorem \ref{Thm_C1} can be converted to a cooperative SDMM scheme with upload cost $N(\frac{ts}{p} + \frac{sr}{p})$, download cost $tr$, cooperation cost $(2p+2X-2)tr$, and recovery threshold $R_c = 2p + 2X - 1$.
\end{Corollary}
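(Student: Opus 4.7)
The plan is to obtain this corollary as an essentially mechanical application of Theorem \ref{Thm_enc_SDMM} to the non-cooperative secure MatDot scheme underlying Theorem \ref{Thm_C1}. First I would identify the non-cooperative version of that MatDot scheme, i.e.\ the scheme in which each of the $R_c = 2p+2X-1$ responding servers simply returns its evaluation $h(a_i) = f(a_i)g(a_i) \in \F_q^{t \times r}$, without any cooperation. For that baseline scheme, the upload cost is $N(\tfrac{ts}{p} + \tfrac{sr}{p})$ (one encoded share of $A$ and of $B$ per server), and each of the $R_c$ responses is a $t \times r$ matrix, so the download cost takes the form $C_d = tr R_c$ required by the hypothesis of Theorem~\ref{Thm_enc_SDMM}.

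Next I would invoke Theorem \ref{Thm_enc_SDMM} directly. Its conclusion transforms any scheme with parameters $(C_u, C_d = trR_c, R_c)$ into a computationally secure cooperative scheme with upload cost $C_u$ unchanged, download cost $tr$, cooperation cost $\tfrac{R_c-1}{R_c} C_d$, and the same recovery threshold $R_c$. Substituting $R_c = 2p+2X-1$ and $C_d = tr(2p+2X-1)$ yields a cooperation cost of
\begin{equation*}
\frac{R_c - 1}{R_c}\, C_d = (R_c - 1)\, tr = (2p+2X-2)\, tr,
\end{equation*}
exactly matching the stated parameters. The upload cost and recovery threshold are inherited from the MatDot construction of Theorem \ref{Thm_C1}.

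The only point needing a brief justification is that the hypotheses of Theorem \ref{Thm_enc_SDMM} genuinely apply: in the non-cooperative MatDot scheme the responses $h(a_{j_i})$ and the evaluation coefficients needed for Lagrange interpolation of the coefficient of $x^{p-1}$ are the exact ingredients fed to Construction~\ref{Con_enc}, and the representative server indeed performs a linear combination on ciphertexts which the user can correct for via the shared PRF keys. Since this construction preserves correctness and is IND-CPA secure, the $X$-security of the original MatDot scheme is preserved against computationally bounded adversaries, as noted in the proof of Theorem~\ref{Thm_enc_SDMM}.

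I do not foresee any real obstacle here; the main thing to be careful about is simply verifying that the MatDot download cost is of the form $trR_c$ with the $tr$ factor coming from a single server's output, so that the arithmetic $\tfrac{R_c-1}{R_c}\cdot trR_c = (R_c-1)tr$ produces the claimed $(2p+2X-2)tr$ cooperation cost.
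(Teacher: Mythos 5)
Your proposal is correct and follows the paper's intended route: the corollary is an immediate application of Theorem~\ref{Thm_enc_SDMM} to the non-cooperative secure MatDot scheme of Theorem~\ref{Thm_C1}, whose download cost $tr R_c$ matches the required form, giving cooperation cost $\frac{R_c-1}{R_c}\cdot trR_c = (R_c-1)tr = (2p+2X-2)tr$. The paper states the corollary without a separate proof for precisely this reason, so your argument fills in exactly the verification the reader is expected to do.
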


\section{Conclusions and Future Work}\label{sec:conclusion}

In this paper, we considered  a new cooperative SDMM model, which utilizes a possible link between the helper servers to reduce the download cost for the user. More precisely, this is enabled by outsourcing server-to-user communication to the servers. 
Outsourcing communication in addition to computation may also help in preventing network congestion. In some cases, \emph{e.g.}, when servers are clustered within close proximity,  inter-server communication can be considered to be cheaper than server--user communication.  Based on this model, a new information--theoretically secure distributed matrix multiplication scheme  was proposed. A comparison of the key parameters between the proposed cooperative SDMM code and some previous ones was given, showing  a significant gain in the download cost. 
While server cooperation can be seen as a general strategy, explicit per-scheme description is nontrivial as there are several parameters to consider.  Construction of information--theoretically secure schemes that allow for a more general matrix partitioning is part of our ongoing work.

By further assuming collusion and cooperation graphs are independent of each other, more specifically, assuming all the servers can cooperate, we proposed a computationally secure SDMM scheme based on encryption, which allows for a more general matrix partitioning and  achieves  yet better communication cost.

\section*{Acknowledgment}
The authors would like to thank the Guest Editors and the three anonymous reviewers for their valuable suggestions and comments, which have greatly improved the presentation and quality of this paper.	

\bibliographystyle{IEEEtran}
\bibliography{SDMM}

% Generated by IEEEtran.bst, version: 1.14 (2015/08/26)
\begin{thebibliography}{10}
\providecommand{\url}[1]{#1}
\csname url@samestyle\endcsname
\providecommand{\newblock}{\relax}
\providecommand{\bibinfo}[2]{#2}
\providecommand{\BIBentrySTDinterwordspacing}{\spaceskip=0pt\relax}
\providecommand{\BIBentryALTinterwordstretchfactor}{4}
\providecommand{\BIBentryALTinterwordspacing}{\spaceskip=\fontdimen2\font plus
\BIBentryALTinterwordstretchfactor\fontdimen3\font minus
  \fontdimen4\font\relax}
\providecommand{\BIBforeignlanguage}[2]{{%
\expandafter\ifx\csname l@#1\endcsname\relax
\typeout{** WARNING: IEEEtran.bst: No hyphenation pattern has been}%
\typeout{** loaded for the language `#1'. Using the pattern for}%
\typeout{** the default language instead.}%
\else
\language=\csname l@#1\endcsname
\fi
#2}}
\providecommand{\BIBdecl}{\relax}
\BIBdecl

\bibitem{dutta2018unified}
S.~Dutta, Z.~Bai, H.~Jeong, T.~M. Low, and P.~Grover, ``A unified coded deep
  neural network training strategy based on generalized polydot codes,'' in
  \emph{2018 IEEE International Symposium on Information Theory (ISIT)}.\hskip
  1em plus 0.5em minus 0.4em\relax IEEE, 2018, pp. 1585--1589.

\bibitem{chang2018capacity}
W.-T. Chang and R.~Tandon, ``On the capacity of secure distributed matrix
  multiplication,'' in \emph{2018 IEEE Global Communications Conference
  (GLOBECOM)}.\hskip 1em plus 0.5em minus 0.4em\relax IEEE, 2018, pp. 1--6.

\bibitem{dutta2020optimal}
S.~Dutta, M.~Fahim, F.~Haddadpour, H.~Jeong, V.~Cadambe, and P.~Grover, ``On
  the optimal recovery threshold of coded matrix multiplication,'' \emph{IEEE
  Transactions on Information Theory}, vol.~66, no.~1, pp. 278--301, 2020.

\bibitem{yang2019secure}
H.~Yang and J.~Lee, ``Secure distributed computing with straggling servers
  using polynomial codes,'' \emph{IEEE Transactions on Information Forensics
  and Security}, vol.~14, no.~1, pp. 141--150, 2019.

\bibitem{kakar2019capacity}
J.~Kakar, S.~Ebadifar, and A.~Sezgin, ``On the capacity and
  straggler-robustness of distributed secure matrix multiplication,''
  \emph{IEEE Access}, vol.~7, pp. 45\,783--45\,799, 2019.

\bibitem{jia2019capacity}
Z.~Jia and S.~A. Jafar, ``On the capacity of secure distributed matrix
  multiplication,'' \emph{arXiv preprint arXiv:1908.06957}, 2019.

\bibitem{d2021degree}
R.~G. D’Oliveira, S.~El~Rouayheb, D.~Heinlein, and D.~Karpuk, ``Degree tables
  for secure distributed matrix multiplication,'' \emph{IEEE Journal on
  Selected Areas in Information Theory}, vol.~2, no.~3, pp. 907--918, 2021.

\bibitem{d2020gasp}
R.~G. D'Oliveira, S.~El~Rouayheb, and D.~Karpuk, ``G{ASP} codes for secure
  distributed matrix multiplication,'' \emph{IEEE Transactions on Information
  Theory}, vol.~66, no.~7, pp. 4038--4050, 2020.

\bibitem{yu2020entangled}
Q.~Yu and A.~S. Avestimehr, ``Entangled polynomial codes for secure, private,
  and batch distributed matrix multiplication: Breaking the `cubic' barrier,''
  in \emph{2020 IEEE International Symposium on Information Theory
  (ISIT)}.\hskip 1em plus 0.5em minus 0.4em\relax IEEE, 2020, pp. 245--250.

\bibitem{aliasgari2020private}
M.~Aliasgari, O.~Simeone, and J.~Kliewer, ``Private and secure distributed
  matrix multiplication with flexible communication load,'' \emph{IEEE
  Transactions on Information Forensics and Security}, vol.~15, pp. 2722--2734,
  2020.

\bibitem{jia2021cross}
Z.~Jia and S.~A. Jafar, ``Cross subspace alignment codes for coded distributed
  batch computation,'' \emph{IEEE Transactions on Information Theory}, vol.~67,
  no.~5, pp. 2821--2846, 2021.

\bibitem{zhu2021improved}
J.~Zhu, Q.~Yan, and X.~Tang, ``Improved constructions for secure multi-party
  batch matrix multiplication,'' \emph{IEEE Transactions on Communications},
  vol.~69, no.~11, pp. 7673--7690, 2021.

\bibitem{chor1995private}
B.~Chor, O.~Goldreich, E.~Kushilevitz, and M.~Sudan, ``Private information
  retrieval,'' in \emph{Proceedings of IEEE 36th Annual Foundations of Computer
  Science}.\hskip 1em plus 0.5em minus 0.4em\relax IEEE, 1995, pp. 41--50.

\bibitem{sun2017capacity}
H.~Sun and S.~A. Jafar, ``The capacity of private information retrieval,''
  \emph{IEEE Transactions on Information Theory}, vol.~63, no.~7, pp.
  4075--4088, 2017.

\bibitem{sun2017capacityrobust}
------, ``The capacity of robust private information retrieval with colluding
  databases,'' \emph{IEEE Transactions on Information Theory}, vol.~64, no.~4,
  pp. 2361--2370, 2017.

\bibitem{banawan2018capacity}
K.~Banawan and S.~Ulukus, ``The capacity of private information retrieval from
  coded databases,'' \emph{IEEE Transactions on Information Theory}, vol.~64,
  no.~3, pp. 1945--1956, 2018.

\bibitem{tajeddine2018private}
R.~Tajeddine, O.~W. Gnilke, and S.~El~Rouayheb, ``Private information retrieval
  from {MDS} coded data in distributed storage systems,'' \emph{IEEE
  Transactions on Information Theory}, vol.~64, no.~11, pp. 7081--7093, 2018.

\bibitem{freij2017private}
R.~Freij-Hollanti, O.~W. Gnilke, C.~Hollanti, and D.~A. Karpuk, ``Private
  information retrieval from coded databases with colluding servers,''
  \emph{SIAM Journal on Applied Algebra and Geometry}, vol.~1, no.~1, pp.
  647--664, 2017.

\bibitem{freij2018t}
R.~Freij-Hollanti, O.~W. Gnilke, C.~Hollanti, A.-L. Horlemann-Trautmann,
  D.~Karpuk, and I.~Kubjas, ``$ t $-private information retrieval schemes using
  transitive codes,'' \emph{IEEE Transactions on Information Theory}, vol.~65,
  no.~4, pp. 2107--2118, 2018.

\bibitem{kumar2019achieving}
S.~Kumar, H.-Y. Lin, E.~Rosnes, and A.~G. i~Amat, ``Achieving maximum distance
  separable private information retrieval capacity with linear codes,''
  \emph{IEEE Transactions on Information Theory}, vol.~65, no.~7, pp.
  4243--4273, 2019.

\bibitem{zhu2019new}
J.~Zhu, Q.~Yan, C.~Qi, and X.~Tang, ``A new capacity-achieving private
  information retrieval scheme with (almost) optimal file length for coded
  servers,'' \emph{IEEE Transactions on Information Forensics and Security},
  vol.~15, pp. 1248--1260, 2019.

\bibitem{zhou2020capacity}
R.~Zhou, C.~Tian, H.~Sun, and T.~Liu, ``Capacity-achieving private information
  retrieval codes from {MDS}-coded databases with minimum message size,''
  \emph{IEEE Transactions on Information Theory}, vol.~66, no.~8, pp.
  4904--4916, 2020.

\bibitem{jia2020x}
Z.~Jia and S.~A. Jafar, ``X-secure {T}-private information retrieval from {MDS}
  coded storage with byzantine and unresponsive servers,'' \emph{IEEE
  Transactions on Information Theory}, vol.~66, no.~12, pp. 7427--7438, 2020.

\bibitem{wang2019symmetric}
Q.~Wang and M.~Skoglund, ``Symmetric private information retrieval from {MDS}
  coded distributed storage with non-colluding and colluding servers,''
  \emph{IEEE Transactions on Information Theory}, vol.~65, no.~8, pp.
  5160--5175, 2019.

\bibitem{d2019one}
R.~G. D'Oliveira and S.~El~Rouayheb, ``One-shot {PIR}: Refinement and
  lifting,'' \emph{IEEE Transactions on Information Theory}, vol.~66, no.~4,
  pp. 2443--2455, 2019.

\bibitem{li2020towards}
J.~Li, D.~Karpuk, and C.~Hollanti, ``Towards practical private information
  retrieval from {MDS} array codes,'' \emph{IEEE Transactions on
  Communications}, vol.~68, no.~6, pp. 3415--3425, 2020.

\bibitem{holzbaur2022towards}
L.~Holzbaur, R.~Freij-Hollanti, J.~Li, and C.~Hollanti, ``Towards the capacity
  of private information retrieval from coded and colluding servers,''
  \emph{IEEE Transactions on Information Theory}, vol.~68, no.~1, pp. 517--537,
  2022.

\bibitem{song2021equivalence}
S.~Song and M.~Hayashi, ``Equivalence of non-perfect secret sharing and
  symmetric private information retrieval with general access structure,''
  \emph{arXiv preprint arXiv:2101.11194}, 2021.

\bibitem{zhu2021capacity}
J.~Zhu, Q.~Yan, X.~Tang, and Y.~Miao, ``Capacity-achieving private information
  retrieval schemes from uncoded storage constrained servers with low
  sub-packetization,'' \emph{IEEE Transactions on Information Theory}, 2021.

\bibitem{d2020notes}
R.~G. D'Oliveira, S.~E. Rouayheb, D.~Heinlein, and D.~Karpuk, ``Notes on
  communication and computation in secure distributed matrix multiplication,''
  in \emph{2020 IEEE Conference on Communications and Network Security
  (CNS)}.\hskip 1em plus 0.5em minus 0.4em\relax IEEE, 2020, pp. 1--6.

\bibitem{yu2017polynomial}
Q.~Yu, M.~Maddah-Ali, and S.~Avestimehr, ``Polynomial codes: an optimal design
  for high-dimensional coded matrix multiplication,'' in \emph{Advances in
  Neural Information Processing Systems}, 2017, pp. 4403--4413.

\bibitem{yu2020straggler}
Q.~Yu, M.~A. Maddah-Ali, and A.~S. Avestimehr, ``Straggler mitigation in
  distributed matrix multiplication: Fundamental limits and optimal coding,''
  \emph{IEEE Transactions on Information Theory}, vol.~66, no.~3, pp.
  1920--1933, 2020.

\bibitem{chen2021gcsa}
Z.~Chen, Z.~Jia, Z.~Wang, and S.~A. Jafar, ``G{CSA} codes with noise alignment
  for secure coded multi-party batch matrix multiplication,'' \emph{IEEE
  Journal on Selected Areas in Information Theory}, vol.~2, no.~1, pp.
  306--316, 2021.

\bibitem{blaser2013fast}
M.~Bl{\"a}ser, ``Fast matrix multiplication,'' \emph{Theory of Computing}, pp.
  1--60, 2013.

\bibitem{mital2020secure}
N.~Mital, C.~Ling, and D.~Gunduz, ``Secure distributed matrix computation with
  discrete {Fourier} transform,'' \emph{arXiv preprint arXiv:2007.03972}, 2020.

\bibitem{sun2018capacity}
H.~Sun and S.~A. Jafar, ``The capacity of private computation,'' \emph{IEEE
  Transactions on Information Theory}, vol.~65, no.~6, pp. 3880--3897, 2018.

\bibitem{joonas}
J.~Pääkkönen, P.~Dharmawansa, C.~Hollanti, and O.~Tirkkonen, ``Distributed
  storage for proximity based services,'' in \emph{2012 Swedish Communication
  Technologies Workshop (Swe-CTW)}, 2012, pp. 30--35.

\bibitem{gaston2013realistic}
B.~Gaston, J.~Pujol, and M.~Villanueva, ``A realistic distributed storage
  system that minimizes data storage and repair bandwidth,'' in
  \emph{Proceedings of the 2013 Data Compression Conference}, 2013, p. 491.

\bibitem{pernas2013non}
J.~Pernas, C.~Yuen, B.~Gast{\'o}n, and J.~Pujol, ``Non-homogeneous two-rack
  model for distributed storage systems,'' in \emph{2013 IEEE International
  Symposium on Information Theory}.\hskip 1em plus 0.5em minus 0.4em\relax
  IEEE, 2013, pp. 1237--1241.

\bibitem{prakash2018storage}
N.~Prakash, V.~Abdrashitov, and M.~M{\'e}dard, ``The storage versus
  repair-bandwidth trade-off for clustered storage systems,'' \emph{IEEE
  Transactions on Information Theory}, vol.~64, no.~8, pp. 5783--5805, 2018.

\bibitem{hou2019rack}
H.~Hou, P.~P. Lee, K.~W. Shum, and Y.~Hu, ``Rack-aware regenerating codes for
  data centers,'' \emph{IEEE Transactions on Information Theory}, vol.~65,
  no.~8, pp. 4730--4745, 2019.

\bibitem{chen2020explicit}
Z.~Chen and A.~Barg, ``Explicit constructions of {MSR} codes for clustered
  distributed storage: The rack-aware storage model,'' \emph{IEEE Transactions
  on Information Theory}, vol.~66, no.~2, pp. 886--899, 2020.

\bibitem{akbari2021secure}
H.~Akbari-Nodehi and M.~A. Maddah-Ali, ``Secure coded multi-party computation
  for massive matrix operations,'' \emph{IEEE Transactions on Information
  Theory}, vol.~67, no.~4, pp. 2379--2398, 2021.

\bibitem{reed1960polynomial}
I.~S. Reed and G.~Solomon, ``Polynomial codes over certain finite fields,''
  \emph{Journal of the society for industrial and applied mathematics}, vol.~8,
  no.~2, pp. 300--304, 1960.

\bibitem{shamir1979share}
A.~Shamir, ``How to share a secret,'' \emph{Communications of the ACM},
  vol.~22, no.~11, pp. 612--613, 1979.

\bibitem{blakley1979safeguarding}
G.~R. Blakley, ``Safeguarding cryptographic keys,'' in \emph{Managing
  Requirements Knowledge, International Workshop on}.\hskip 1em plus 0.5em
  minus 0.4em\relax IEEE Computer Society, 1979, pp. 313--313.

\bibitem{LecNotesSecret}
C.~Padro, ``Lecture notes in secret sharing,'' Cryptology ePrint Archive,
  Report 2012/674, 2012, \url{https://eprint.iacr.org/2012/674}.

\bibitem{katz2020introduction}
J.~Katz and Y.~Lindell, \emph{Introduction to modern cryptography}.\hskip 1em
  plus 0.5em minus 0.4em\relax CRC press, 2020.

\bibitem{goldreich2005foundations}
O.~Goldreich, \emph{Foundations of cryptography: a primer}.\hskip 1em plus
  0.5em minus 0.4em\relax Now Publishers Inc, 2005, vol.~1.

\bibitem{goldwasser1984probabilistic}
S.~Goldwasser and S.~Micali, ``Probabilistic encryption,'' \emph{Journal of
  computer and system sciences}, vol.~28, no.~2, pp. 270--299, 1984.

\bibitem{coopSDMM_arxiv}
J.~Li, O.~Makkonen, C.~Hollanti, and O.~Gnilke, ``Efficient recovery of a
  shared secret via cooperation: {Applications} to {SDMM and PIR},''
  \emph{arXiv preprint arXiv:2106.05785}, 2021.

\bibitem{stoer2013introduction}
J.~Stoer and R.~Bulirsch, \emph{Introduction to numerical analysis}.\hskip 1em
  plus 0.5em minus 0.4em\relax Springer Science \& Business Media, 2013,
  vol.~12.

\bibitem{yu2019lagrange}
Q.~Yu, S.~Li, N.~Raviv, S.~M.~M. Kalan, M.~Soltanolkotabi, and S.~A.
  Avestimehr, ``Lagrange coded computing: Optimal design for resiliency,
  security, and privacy,'' in \emph{The 22nd International Conference on
  Artificial Intelligence and Statistics}.\hskip 1em plus 0.5em minus
  0.4em\relax PMLR, 2019, pp. 1215--1225.

\end{thebibliography}

%\begin{thebibliography}{99}
%
%
%\end{thebibliography}
\begin{IEEEbiography}[{\includegraphics[width=1in,height=1.25in,clip,keepaspectratio]{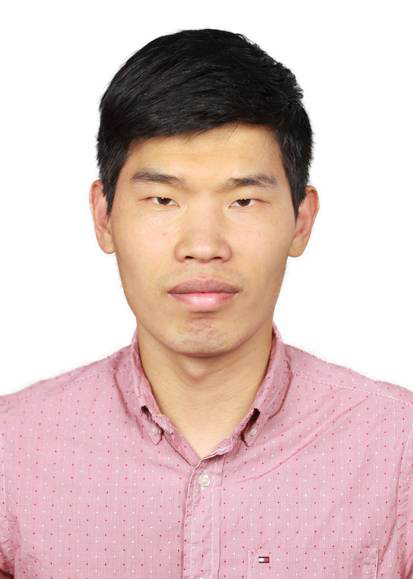}}]
{Jie Li}(Member, IEEE)    
received the B.S. and M.S. degrees in mathematics from Hubei University, Wuhan, China, in 2009 and 2012, respectively, and received the Ph.D. degree from the department of communication engineering, Southwest Jiaotong University, Chengdu, China, in 2017. From  2015 to 2016, he was a visiting Ph.D. student with the Department of Electrical Engineering and Computer Science, The University of Tennessee at Knoxville, TN, USA.  From   2017 to   2019, he was a postdoctoral researcher with the Department of Mathematics, Hubei University, Wuhan, China. From  2019 to 2021, he was a postdoctoral researcher with the Department of Mathematics and Systems Analysis, Aalto University, Finland. He is currently a senior researcher with the Theory Lab, Huawei Tech. Investment Co., Limited, Hong Kong SAR, China. His research interests include private information retrieval, coding for distributed storage, and sequence design.

Dr. Li received the IEEE Jack Keil Wolf ISIT Student Paper Award in 2017.
\end{IEEEbiography}

\begin{IEEEbiography}[{\includegraphics[width=1in,height=1.25in,clip,keepaspectratio]{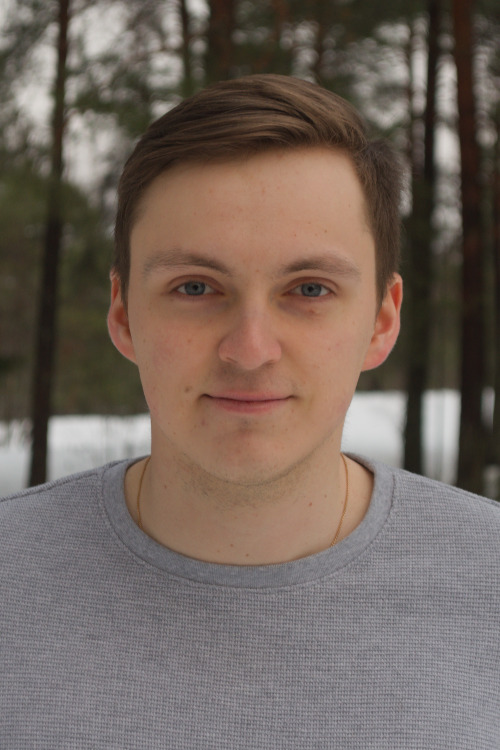}}]
{Okko Makkonen}
received the B.Sc. (Tech.) degree in mathematics from Aalto University, Finland, in 2021, where he is currently pursuing the M.Sc. (Tech.) degree in Hollanti's ANTA group. His research interests include information-theoretically secure distributed computation schemes.
\end{IEEEbiography}

\begin{IEEEbiography}[{\includegraphics[width=1in,height=1.25in,clip,keepaspectratio]{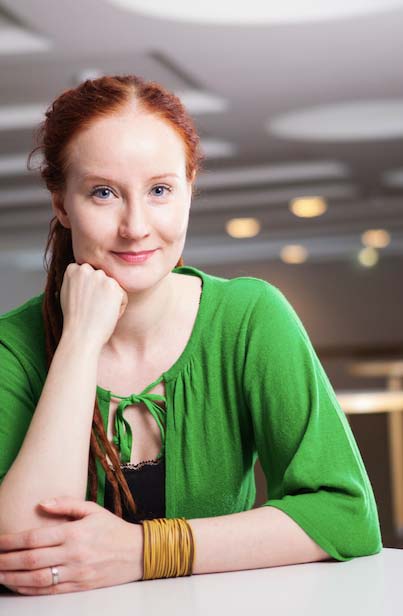}}]
{Camilla Hollanti}(Member, IEEE)     received the M.Sc. and Ph.D. degrees from the University of Turku, Finland, in 2003 and 2009, respectively, both in pure mathematics. Her research interests lie within applications of algebraic number theory to wireless communications and physical layer security, as well as in combinatorial and coding theoretic methods related to distributed storage systems and private information retrieval.

For 2004-2011 Hollanti was with the University of Turku. She joined the University of Tampere as  Lecturer for the academic year 2009-2010. Since 2011, she has been with the Department of Mathematics and Systems Analysis at Aalto University, Finland, where she currently works as Full Professor and Vice Head, and leads a research group in Algebra, Number Theory, and Applications. During 2017-2020, Hollanti was affiliated with the Institute of Advanced Studies at the Technical University of Munich, where she held a three-year Hans Fischer Fellowship, funded by the German Excellence Initiative and the EU 7th Framework Programme.

Hollanti is currently an editor of the AIMS Journal on Advances in Mathematics of Communications,  SIAM Journal on Applied Algebra and Geometry, and IEEE Transactions on Information Theory. She is a recipient of several grants, including six Academy of Finland grants. In 2014, she received the World Cultural Council Special Recognition Award for young researchers. In 2017, the Finnish Academy of Science and Letters awarded her the V\"ais\"al\"a Prize in Mathematics. For 2020-2022, Hollanti is serving as a member of the Board of Governors of the IEEE Information Theory Society, and is one of the General Chairs of IEEE ISIT 2022.
\end{IEEEbiography}

\begin{IEEEbiography}[{\includegraphics[width=1in,height=1.25in,clip,keepaspectratio]{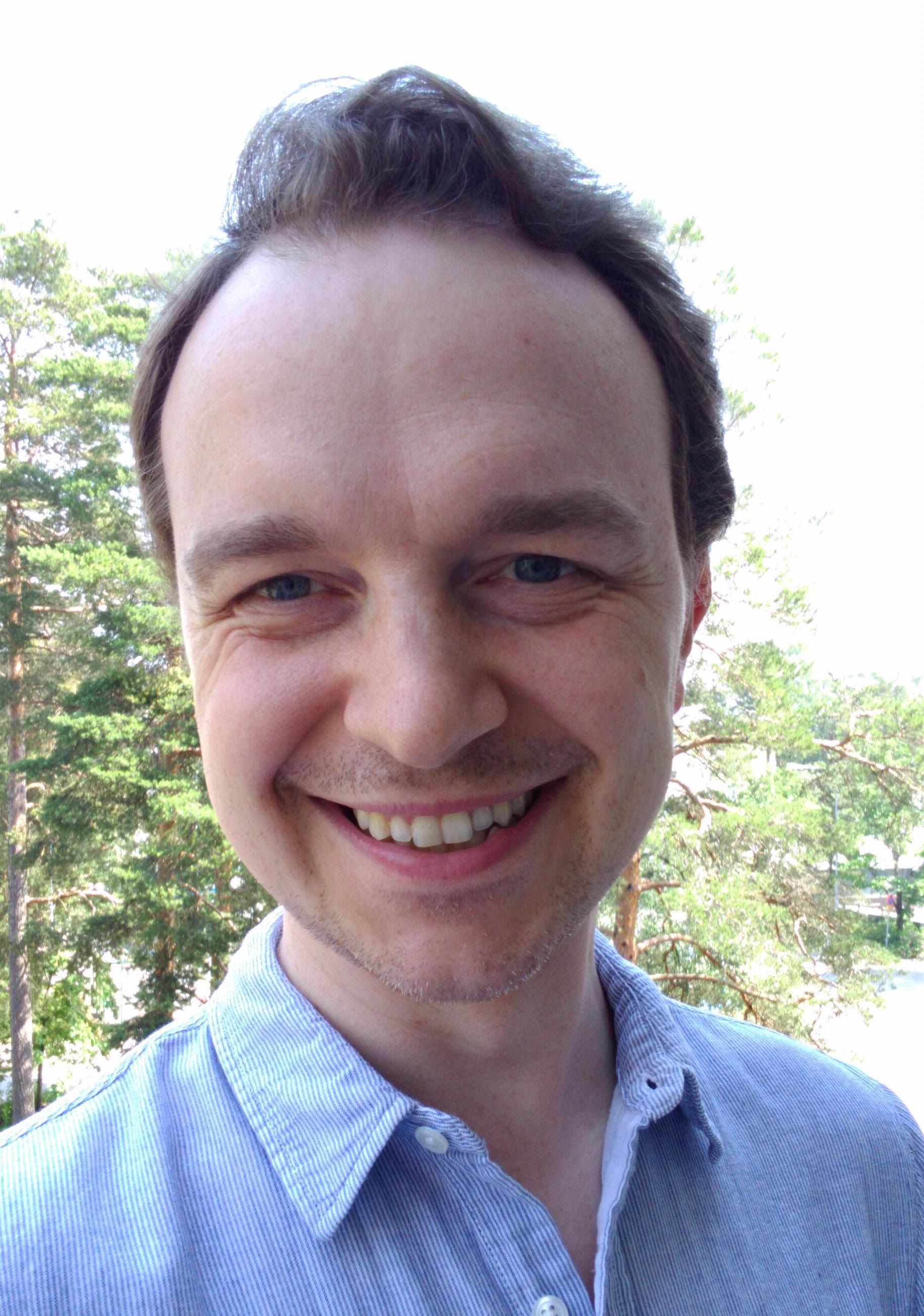}}]
{Oliver W. Gnilke} received the Diploma degree in mathematics from the Universit{\"a}t Hamburg, Germany, in 2010, and the Ph.D. degree from the University College Dublin, Ireland, in 2015. He went on to work as a Post-Doctoral Researcher with Aalto University from 2015 to 2018. He received a full year grant by the Finnish Cultural Foundation and was a Visiting Researcher with the TU Munich for two months. Since 2019, he has been working with Aalborg University, Denmark, where he is currently an Associate Professor for coding theory and cryptography with the Department of Mathematical Sciences. His research interests include applications of coding theory, privacy and security in communications, and combinatorial
designs.  
\end{IEEEbiography}
\end{document}